\theoremstyle{plain}
\newtheorem{thm}{Theorem}
\newtheorem{lem}{Lemma}
\newtheorem{prop}{Proposition}
\theoremstyle{definition}
\newtheorem{rem}{Remark}
\newtheorem{prob}{Problem}
\title{\bf
Construction of control barrier function and $C^2$ reference trajectory for constrained attitude maneuvers  }
\author{Xiao Tan, and  Dimos V. Dimarogonas 
\thanks{*This work was supported by the H2020 ERC Starting Grant BUCOPHSYS, the SSF COIN project, the Swedish Research Council (VR) and the Knut och Alice Wallenberg Foundation.}
\thanks{Xiao Tan and Dimos V. Dimarogonas are with the Division of Decision and Control Systems, KTH Royal Institute of Technology, Stockholm, Sweden.
        {\tt\small xiaotan@kth.se, dimos@kth.se}}%
}
\begin{document}

\maketitle
\thispagestyle{plain}
\pagestyle{plain}

\begin{abstract}
 Constrained attitude maneuvers have numerous applications in robotics and aerospace. 
 In our previous work, a general framework to this problem was proposed with resolution completeness guarantee. However, a smooth reference trajectory and a low-level safety-critical controller were lacking.  In this work, we propose a novel construction of a $C^2$ continuous reference trajectory based on B\'ezier curves on $ SO(3) $ that  evolves within predetermined cells and eliminates previous stop-and-go behavior. Moreover, we propose a novel zeroing control barrier function on $ SO(3) $ that provides a safety certificate  over a set of overlapping cells on $ SO(3) $ while avoiding nonsmooth analysis. The safety certificate is given as a linear constraint on the control input and implemented in real-time.  A remedy is proposed to handle the states where the coefficient of the control input in the linear constraint vanishes.  Numerical simulations are given to verify the advantages of the proposed method.


%
%
%

\end{abstract}

\section{Introduction}

The study of the attitude (orientation) control problem arised from early space and aerial applications and became prevalent in autonomous robotic systems.  A recent trend in this field is to study this problem using Lie group theory \cite{koditschek1988application,bayadi2014almost,lee2012robust,berkane2017hybrid}, motivated by the fact that there exists no  attitude parametrization other than the special orthogonal group $ SO(3) $ that  both globally and uniquely represents the rotational space and avoids singularities and the unwinding phenomenon.    Many safe-critical applications, such as space telescopes observing some celestial regions while avoiding bright stars \cite{hablani1999attitude}, and the anisotropy sensitive imaging and communication payloads on UAVs and AUVs,  motivate further study of the attitude planning and control problem in the presence of orientation constraints (i.e., unfeasible rotational regions).  

There exist two main approaches for the constrained attitude maneuver problem: the potential-function \cite{lee2014feedback,kulumani2017constrained,hu2018anti} and the planning based methods \cite{frazzoli2001randomized,kjellberg2015discretized,biggs2016geometric}. By delicately designing a potential function, the feedback controller utilizes the negative gradient to guide the rotational movement. Generally speaking, potential-function based methods are easy to implement, but the state trajectory  may get stuck at local minima (where the gradient vanishes) and requires convexity of the safe regions. On the contrary, planning-based methods try to find a feasible trajectory leading to the target state and then a tracking controller is utilized. This approach, however, mainly suffers from the lack of completeness guarantees, i.e., derive a solution if it exists,  and safety guarantees, i.e., a certificate that the actual trajectory will not derivate from the reference and remain in the safe region.  

In our previous work \cite{tan2020constrained}, a hierarchy framework was proposed for the constrained attitude maneuver problem consisting of 1) discretizing the rotation group $ SO(3) $ into finite overlapping cells, 2) planning over the cells, and 3) reference trajectory generation and tracking control. Viewing the sampling step as the resolution level, we guarantee a feasible path is to be found in finite time when one exists at that resolution. However, the reference trajectory in \cite{tan2020constrained} is constructed by the concatenation of geodesic paths  and has to reach zero velocities at end points for each sub-maneuvers. This is a potential drawback as it requires the vehicle to stop and go from configuration to configuration. Moreover, no safety guarantee is developed for the low-level tracking controller.  

In this work, we follow the framework in \cite{tan2020constrained} and further construct a $ C^2 $ reference trajectory and develop a safety certificate by designing zeroing control barrier functions on $ SO(3) $. The $ C^2  $ reference trajectory is generated by a  B\'ezier curve on $ SO(3) $. By choosing the controlling points carefully, we show that the constructed curve is of $ C^2 $ continuity, connects the initial and target orientations, and evolves within the set of given cells. 

This paper has two additional contributions: 1) Noting that the safety region is a union of a set of overlapping cells, we formulate a smooth control barrier function and thus avoid the nonsmooth analysis as the case in \cite{glotfelter2017nonsmooth}. The proposed formulation is at the cost of shrinking the safety region and this conservativeness can be explicitly adjusted by a user-defined parameter; 2) Since the Lie derivative of the control barrier function candidate vanishes at certain states, existing high-order control barrier function design methods \cite{wences2020correct,xiao2019control} are not directly applicable. To address this issue, we introduce a remedy with a proof to render the constraint on the control input feasible for all states in the safety region. All results are illustrated though relevant simulations.

\section{Preliminaries and problem formulation}

The set of real, non-negative real, and  positive integer numbers are denoted as $ \mathbb{R}, \mathbb{R}_{\ge}, \mathbb{N}  $, respectively.  $ \mathbb{R}^n $ denotes the $ n $-dimensional Euclidean space. The 2-norm of a vector $ {x} \in \mathbb{R}^n $ is $ \| {x} \|_2 := \sqrt{{x}^\top {x}}  $. $ I $ is the $ 3 $-dimensional identity matrix. The Frobenius norm of $ A $ is defined as $ \Vert A \Vert_F = \mathbf{ tr}(A^\top A) $, where  $ \mathbf{ tr}(\cdot)  $ denotes the trace of a matrix. The Lie derivatives of a function $h(x)$ for the system $\dot{x} = f( x) +  g( x)  u$ are denoted by $L_{f} h:=\frac{\partial h}{ \partial x} f(x) $ and $L_{g} h:=\frac{\partial h}{ \partial x} g(x)$, respectively.  A continuous function $ \alpha: (-b,a) \to (-\infty, \infty)  $ is said to belong to \textit{extended class} $ \mathcal{K} $ for some $ a, b > 0 $ if it is strictly increasing and $ \alpha(0) = 0 $ \cite{ames2016control}.


Any rotation matrix is an element of the Special Orthogonal group $ SO(3) := \{ R \in \mathbb{R} ^{3\times 3} \vert R^\top R=RR^\top= I, \det{(R)} = 1 \} $ which, when associated with the matrix multiplication operation, forms a Lie group. The associated Lie algebra, denoted by $ \mathfrak{so}(3) $, consists of the set of all skew-symmetric $ 3\times 3 $ matrices, {\it i.e.,} $\mathfrak{so}(3):=\{\Omega\in\mathbb{R}^{3\times 3} : \Omega^\top=-\Omega\}$. The Lie bracket for $  \mathfrak{so}(3) $ is given as $ [V,W] = VW - WV $, for any $ V,W \in \mathfrak{so}(3)$. The map $ [(\cdot)]_{\times} : \mathbb{R}^{3} \rightarrow \mathfrak{so}(3) $ and its inverse map $ \vee:\mathfrak{so}(3) \rightarrow \mathbb{R}^{3} $  are explicitly defined as
$ x = \begin{psmallmatrix}
x_1 \\ x_2\\  x_3
\end{psmallmatrix}  \xrightleftharpoons[(\cdot)^{\vee}]{[(\cdot)]_{\times}} [x]_{\times} = \begin{psmallmatrix}
0   & -x_3 & x_2\\
x_3 &  0   & -x_1\\
-x_2& x_1 & 0
\end{psmallmatrix}.  $ The Lie algebra $\mathfrak{so}(3)$ allows to represent rotation matrices on $SO(3)$ via  the matrix exponential $  \exp(\cdot)$. For $[x]_{\times}\in \mathfrak{so}(3) $,
$\exp([x]_{\times}) = I+\frac{\sin(\|x\|_2)}{\|x\|_2}[x]_{\times}+\frac{1-\cos(\|x\|_2)}{\|x\|_2^2}[x]_{\times}^2 $ when $x\neq 0$, and $\exp([x]_{\times}) = I $ otherwise \cite{bullo2004geometric}. For all rotation matrices $R$ with $ \mathbf{ tr}( R) \neq -1$, the exponential map admits an inverse logarithmic map given by $\log(R) = \frac{\theta(R)}{2\sin(\theta(R))}(R-R^\top)$ when $R\neq I$, and $\log(R) = 0$ otherwise, where $\theta(R):=\arccos\left((\mathbf{ tr}(R)-1)/2\right)$ is the rotation angle associated to $R$ \cite{bullo2004geometric}.

A \textit{path} $ F(\cdot) $ in $ \mathcal{A} $  connecting $ R_{1} \in SO(3)$ and $ R_{2} \in SO(3) $, where $\mathcal{A}$ is a subset of $SO(3)$, is defined as a continuous function $ F: [a,b] \rightarrow \mathcal{A}$ with $ F(a) = R_{1} $ and $ F(b) = R_{2} $. If there exists such a path  $ F(\cdot) $, we say $ (R_{1}, R_{2}) $ is \textit{connected}. If any two points in $ \mathcal{A} $ are connected, then we call the set $ \mathcal{A} $ \textit{path-connected}.
Given  any $ R_1, R_2 \in SO(3) $ with $\mathbf{tr}(R_1^\top R_2)\neq -1$, the {\it geodesic path} between $ R_1 $ and $ R_2 $ is $ R(\tau) = R_1 \exp(\tau \log(R_1^\top R_2)), 0\le \tau \le 1 $. The {\it angular distance} between $ R_1, R_2 $ is given by $\textup{d}(R_1, R_2) := \lVert \log{(R_1R_2^\top)} \rVert_2$. 



  In \cite{tan2020constrained}, we proposed a general framework for the constrained attitude maneuver problem consisting of  $SO(3)$ space partitioning, planning, and reference trajectory generation. We briefly recall the results here. Let the sampling set $ U : = \{R_1, R_2, \dots, R_i, \dots, R_n\} $ be a finite set with $ n $ elements in $ SO(3) $ and let $ \mathcal{N}^{\prime} := \{ 1,2, \dots, n \} $ be an index set. For each $i\in\mathcal{N}^{\prime}$, define the cell region $ S_i $ as the open ball centered at  $ R_i $ with a radius $ \theta \in (0,\pi/2)$, i.e., $S_i := \{ R\in SO(3): \textup{d}(R, R_i)< \theta\},  \forall i \in \mathcal{N}^{\prime}.$  The neighborhood set $ N_{i}  $ of $ R_i $ is defined as $N_{i} := \{  R \in U :   \textup{d}(R, R_i) < 2\theta, R \neq R_i \}, \forall i \in \mathcal{N}^\prime$. Cells $ S_i, S_j $ are  \textit{adjacent} if $S_i\cap S_j\neq \emptyset $.

  By choosing $ U $ and $ \theta $ such that the conditions in  \cite[Theorem 1]{tan2020constrained}  are satisfied,  we guarantee that for an arbitrary cell $S_i\in \mathcal{N}^{\prime}$, it has adjacent cells; the center points of adjacent cells are strictly outside of $S_i$; the union of the cells covers the whole $SO(3)$ space. Mathematically, 
 	\begin{enumerate}
 		\item[i.]  For all $   i \in \mathcal{N}^{\prime}, N_{i} \neq \emptyset $;
 		\item[ii.] For all $   i, j \in \mathcal{N}^{\prime}, i\neq j $, we have $  R_j \notin S_i $;
 		\item[iii.] For  all $ R_i \in U $, and all $ R_j \in N_i $, $	\theta < \textup{d}(R_i, R_j) < 2 \theta$;
 		\item[iv.]  $		\underset{i\in \mathcal{N}^{\prime}}{\cup}S_i = SO(3).$
 	\end{enumerate}

\begin{lem}[\hspace*{-3px} \cite{tan2020constrained}] \label{lem:convex_cell}
	For any cell $ S_i, i\in \mathcal{N}^\prime $ and two arbitrary points $ R_{i1}, R_{i2} \in S_i $, the geodesic path between $ R_{i1} $ and $ R_{i2} $ is within $ S_i$, i.e., for any $ R_{i1}\in S_i, R_{i2} \in S_i  $, $R(\tau) =  R_{i1} \exp(\tau \log(R_{i1}^\top R_{i2})) \in S_i, 0\le \tau \le 1  $, holds.
\end{lem}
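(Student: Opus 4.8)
The plan is to reduce the claim to a one-dimensional trigonometric estimate for the rotation angle of the geodesic measured from the cell center $R_i$. Using left-invariance of $\textup{d}$, write $R_i^\top R(\tau)=A\exp\!\big(\tau\log(A^\top B)\big)$ with $A:=R_i^\top R_{i1}$, $B:=R_i^\top R_{i2}$, and $\textup{d}(R(\tau),R_i)=\theta\!\big(R_i^\top R(\tau)\big)$; hence it suffices to show that whenever $A,B\in SO(3)$ have rotation angles $\theta(A),\theta(B)<\theta$, the geodesic $\gamma(\tau):=A\exp(\tau\log(A^\top B))$ satisfies $\theta(\gamma(\tau))<\theta$ for all $\tau\in[0,1]$. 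Put $\alpha:=\theta(A)$, $\ell:=\theta(A^\top B)=\textup{d}(A,B)$ (so $\ell<2\theta<\pi$ by the triangle inequality, which in particular makes $\log(A^\top B)$ well defined, as required for the formula in the statement), let $a,n$ denote the rotation axes of $A$ and $A^\top B$, and set $\mu:=a^\top n\in[-1,1]$ and $\phi:=\tau\ell\in[0,\ell]$.

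The core of the argument is the quantity $g(\tau):=\tfrac12\sqrt{1+\mathbf{tr}(\gamma(\tau))}$, which by $\mathbf{tr}(M)=1+2\cos\theta(M)=4\cos^2(\theta(M)/2)-1$ equals $\cos\!\big(\theta(\gamma(\tau))/2\big)$; thus $\theta(\gamma(\tau))<\theta$ is equivalent to $g(\tau)>\cos(\theta/2)$. I claim $g$ is the absolute value of a single shifted cosine of $\phi$. The quickest derivation lifts $\gamma$ to the unit-quaternion double cover of $SO(3)$: $A$ lifts to $\big(\cos\tfrac\alpha2,\ \sin\tfrac\alpha2\,a\big)$ and $\exp(\tau\log(A^\top B))$ to $\big(\cos\tfrac\phi2,\ \sin\tfrac\phi2\,n\big)$, so the scalar part of their product is $\cos\tfrac\alpha2\cos\tfrac\phi2-\mu\,\sin\tfrac\alpha2\sin\tfrac\phi2$, whose absolute value is $\cos(\theta(\gamma(\tau))/2)=g(\tau)$. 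Writing this in amplitude-phase form,
\[
g(\tau)=\big|\,K\cos\!\big(\tfrac\phi2-\delta\big)\,\big|,\qquad K:=\sqrt{\cos^2\tfrac\alpha2+\mu^2\sin^2\tfrac\alpha2}>0,\quad \tan\delta=\mu\tan\tfrac\alpha2,\quad |\delta|\le\tfrac\alpha2<\tfrac\theta2<\tfrac\pi4 .
\]
(The same identity can be obtained purely on $SO(3)$ by expanding $\mathbf{tr}\big(A\exp(\tau\ell[n]_\times)\big)$ via the exponential formula recalled above, but the quaternion computation is shorter.)

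To finish, a crude bound $\textup{d}(\gamma(\tau),R_i)\le\min\{\tau\ell,(1-\tau)\ell\}+\max\{\theta(A),\theta(B)\}<2\theta<\pi$ forces $\theta(\gamma(\tau))<\pi$, hence $g(\tau)>\cos\theta>0$ for every $\tau$; in particular $\cos(\tfrac\phi2-\delta)$ never vanishes, and since its argument equals $-\delta\in(-\tfrac\pi4,\tfrac\pi4)$ at $\tau=0$ and varies continuously with $\tau$, it stays inside $(-\tfrac\pi2,\tfrac\pi2)$, where $\cos$ is positive and attains its minimum over any subinterval at an endpoint. Consequently $g(\tau)=K\cos(\tfrac\phi2-\delta)\ge\min\{g(0),g(1)\}=\min\{\cos\tfrac\alpha2,\ \cos\tfrac{\theta(B)}{2}\}>\cos\tfrac\theta2$, using $\gamma(1)=B$ and $\alpha,\theta(B)<\theta$ with all half-angles in $[0,\tfrac\pi2)$. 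Hence $\theta(\gamma(\tau))<\theta$, i.e.\ $R(\tau)\in S_i$. I expect the main obstacle to be exactly this last step — extracting the estimate from the amplitude-phase form while correctly handling the sign (so that $g$ genuinely is $\cos(\theta(\gamma(\tau))/2)$, not its negative) and the range of the argument — whereas the algebra producing the amplitude-phase form is short once one passes to quaternions.
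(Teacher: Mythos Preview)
The paper does not prove this lemma; it is quoted from the authors' earlier work \cite{tan2020constrained} and used as a black box, so there is no in-paper argument to compare against.

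Your proof is correct and self-contained. The reduction by left-invariance, the quaternion lift giving the scalar part $\cos\tfrac\alpha2\cos\tfrac\phi2-\mu\sin\tfrac\alpha2\sin\tfrac\phi2$, and the amplitude--phase rewriting all go through; the triangle-inequality bound $\textup{d}(\gamma(\tau),R_i)<2\theta<\pi$ is exactly what is needed to keep $g(\tau)>0$ and hence confine the cosine argument to $(-\tfrac\pi2,\tfrac\pi2)$, after which concavity (equivalently, unimodality) of $\cos$ on that interval forces the minimum of $g$ to occur at an endpoint. One cosmetic slip: with $\tan\delta=\mu\tan(\alpha/2)$ the identity reads $K\cos(\tfrac\phi2+\delta)$, not $K\cos(\tfrac\phi2-\delta)$; since only $|\delta|\le\alpha/2$ is used, nothing downstream is affected.
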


\begin{lem}[\hspace*{-6px}\cite{tan2020constrained}] \label{lem:cell_to_cell}
	The geodesic path between any two neighboring sampling rotations $ R_i $ and $ R_j $ is within $ S_i \cup S_j $, i.e.,  $R(\tau) =  R_{i} \exp(\tau \log(R_{i}^\top R_{j})) \in S_i\cup S_j, 0\le \tau \le 1  $.
\end{lem}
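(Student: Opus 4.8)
The plan is to use the fact that the geodesic $R(\tau) = R_{i}\exp(\tau\log(R_{i}^\top R_{j}))$ is a constant-angular-speed curve, so that the angular distances of $R(\tau)$ from the two endpoints $R_i$ and $R_j$ grow linearly in $\tau$, and then to combine this with property iii, which forces $\textup{d}(R_i,R_j)$ to be strictly smaller than $2\theta$. Since a point is in $S_i$ exactly when its distance to $R_i$ is below $\theta$ (and similarly for $S_j$), this strict bound is precisely what makes the two "balls along the curve" overlap and jointly cover the whole parameter interval.

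Concretely, I would first set $d:=\textup{d}(R_i,R_j)$ and $\omega:=\log(R_i^\top R_j)$. By property iii, $\theta < d < 2\theta < \pi$; in particular $\mathbf{tr}(R_i^\top R_j)\neq -1$, so $\omega\in\mathfrak{so}(3)$ is well defined, the geodesic reads $R(\tau)=R_i\exp(\tau\omega)$, and one has $R_i^\top R(\tau)=\exp(\tau\omega)$ as well as $R(\tau)^\top R_j = \exp(\tau\omega)^\top\exp(\omega) = \exp(-\tau\omega)\exp(\omega) = \exp\big((1-\tau)\omega\big)$, using that $\omega$ is skew-symmetric and that $\tau\omega$ and $\omega$ commute.

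The key computational step is to evaluate $\textup{d}(R_i,R(\tau))$ and $\textup{d}(R(\tau),R_j)$. I would use that $\textup{d}(A,B)$ coincides with the rotation angle $\theta(A^\top B)$ (both depending only on $\mathbf{tr}(A^\top B)=\mathbf{tr}(AB^\top)$) together with the identity $\theta(\exp([x]_{\times}))=\|x\|_2$ valid whenever $\|x\|_2\le\pi$. Since $\tau d\le d<\pi$ and $(1-\tau)d\le d<\pi$ for every $\tau\in[0,1]$, this yields $\textup{d}(R_i,R(\tau))=\tau d$ and $\textup{d}(R(\tau),R_j)=(1-\tau)d$. Consequently $R(\tau)\in S_i$ whenever $\tau<\theta/d$, and $R(\tau)\in S_j$ whenever $\tau>1-\theta/d$.

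Finally I would observe that these two conditions together exhaust $[0,1]$: because $d<2\theta$ we get $\theta/d>\tfrac12>1-\theta/d$, so the intervals $[0,\theta/d)$ and $(1-\theta/d,1]$ overlap, and every $\tau\in[0,1]$ lies in at least one of them; hence $R(\tau)\in S_i\cup S_j$ for all $\tau\in[0,1]$, which is the claim. I expect the only delicate point to be the bookkeeping in the distance computation — making sure all the logarithm and angle arguments stay in the principal range $[0,\pi)$ so that the linear growth of the geodesic distance is exact — and this is exactly what the strict inequality $d<2\theta<\pi$ from property iii guarantees; the concluding covering argument is elementary.
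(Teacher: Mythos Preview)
Your argument is correct. The paper does not supply its own proof of this lemma; it is quoted from the authors' earlier work \cite{tan2020constrained} and stated here without proof. Your self-contained argument---computing $\textup{d}(R_i,R(\tau))=\tau d$ and $\textup{d}(R(\tau),R_j)=(1-\tau)d$ from the commutativity of $\tau\omega$ and $\omega$, and then using $d<2\theta$ from property~iii to make the two open parameter intervals $[0,\theta/d)$ and $(1-\theta/d,1]$ overlap---is exactly the natural proof and matches what one would expect the cited paper to contain.
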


 We approximate a generic safe attitude zone on $ SO(3) $ by a set of cells $ \{ S_i \}, i \in \mathcal{N}, \mathcal{N} \subset \mathcal{N}^\prime  $ and a graph search algorithm is utilized that gives out a sequence of cells whenever feasible at the given resolution level. Without loss of generality, by re-labeling the cells, we assume that the initial orientation $ R_0 \in S_1$, the target orientation  $ R_f \in S_m$, $ S_i$  and $ S_{i+1} $ are adjacent cells for $ i \in \{ 1, \cdots, m-1\}$.  Based on Lemmas \ref{lem:convex_cell}$, $\ref{lem:cell_to_cell}, a center-to-center attitude maneuver was then designed, as illustrated in blue dash line in Fig. \ref{fig:trajectory_illustration}. Though the reference trajectory is guaranteed to be within the feasible region, it is not favorable in practice as it requires the rigid-body to reach zero velocities at the end points for each sub-maneuvers.  
 


\subsection{ Problem formulation}

The attitude dynamics of a rigid body are given by
\begin{equation} \label{eq:dyn}
\left\{ \begin{array}{l}
\dot{R} = R [\omega]_{\times}, \\
J \dot{\omega} + [\omega]_{\times}J\omega = u,
\end{array}\right.
\end{equation}
where the attitude $ R\in SO(3) $, $ \omega \in \mathbb{R}^3 $ is the angular velocity in the body-fixed frame, $ J $ is the constant and known inertia matrix and $ u \in \mathbb{R}^3 $ is the input torque.  Given a set of cells $\{ S_i \}, i \in \mathcal{N}, \mathcal{N} \subset \mathcal{N}^\prime $, we call a trajectory  $ \gamma: t \mapsto R(t) $ is \textit{safe} if the trajectory always evolves within $\cup_{i \in \mathcal{N}} S_i$.

The control scheme consists of two parts: reference generation and trajectory tracking.

\begin{prob}{(\bf reference generation)} \label{prob:1}
Given a set of cells $\{ S_i \}, i \in \mathcal{N}, \mathcal{N} \subset \mathcal{N}^\prime $ such that $ \cup_{i \in \mathcal{N}} S_i $ is path-connected. For any given $ R_0, R_f \in \cup_{i \in \mathcal{N}} S_i $, find a $ C^2 $ curve $ \gamma : \mathbb{R}_{\ge} \to \cup_{i \in \mathcal{N}} S_i $  such that $ d\gamma /dt(0) = d\gamma /dt(T) =0,  D^2\gamma /dt^2 (0) = D^2\gamma /dt^2(T) = 0 $\footnote{  For a curve $\gamma: \mathbb{R}\to SO(3)$, $D^2\gamma /dt^2 $ represents the geometric acceleration instead of the second-order total derivatives, following the terminology in \cite{bullo2004geometric}.}, $  \gamma (0) = R_0, \gamma (t) = R_f, \forall t \geq T$.
\end{prob}

\begin{prob}{(\bf trajectory tracking)} \label{prob:2}
Given a $ C^2 $ curve $ \gamma:\mathbb{R}_{\ge} \to \cup_{i \in \mathcal{N}} S_i $, design a control law  $ u $ for the system \eqref{eq:dyn} such that $ R(t) \in \cup_{i\in \mathcal{N}} S_i $ for $ t\ge 0  $ and  $ \lim_{t \to \infty}  R(t) =  \gamma (t) $.
\end{prob}

In the following, we will solve Problem \ref{prob:1} and Problem \ref{prob:2} in Section \ref{sec:bezier_curve} and Section \ref{sec:control_barrier}, respectively.


\section{ B\'ezier curve construction over cells } \label{sec:bezier_curve}
In this section, we construct a reference trajectory based on B\'ezier curve on $ SO(3) $ that solves Problem \ref{prob:1}.   B\'ezier curve is chosen here  because   De Casteljau Algorithm, which generates B\'ezier curves, is in essence a geometric construction, and naturally generalizes to $SO(3)$ manifold, while other splines are not defined / easy to compute on $SO(3)$. 

\subsection{De Casteljau Algorithm on $ SO(3) $}
  We briefly recall De Casteljau Algorithm  from \cite{park1995bezier} as follows.   Taking the geodesics on $ SO(3) $ as the analog of straight lines,  De Casteljau Algorithm connects two points in $SO(3)$ via an iterative linear interpolation process. Let $ n+1 $ ordered points of $ SO(3) $ be $  \{ x_0,x_1,\cdots,x_n \} $. The sequence of curves is defined recursively on $ SO(3) $ as
 \begin{multline} \label{eq:bezier_curve_recursive}
 	x_i^k(\tau) = x_{i-1}^{k-1}(\tau)  \exp(\tau \log([x_{i-1}^{k-1}(\tau)]^\top x_i^{k-1}(\tau))), \\ k = 0,1,\cdots,n, \quad  i = k,k+1, \cdots,n,
 \end{multline} 
where $ x_i^0(\tau) = x_i $. The B\'ezier curve is then given by
\begin{equation}\label{eq:bezier_curve}
x_n^n(\tau) =  x_{n-1}^{n-1}(\tau)  \exp(\tau \log([x_{n-1}^{n-1}(\tau)]^\top x_n^{n-1}(\tau))).
\end{equation}

\begin{lem}[\hspace*{-3px}\cite{crouch1999casteljau}] \label{lem:De_Cas}
	Let $ n+1 $ ordered points of $ SO(3)$ be $  \{ x_0,x_1,\cdots,x_n \} $. The corresponding B\'ezier curve generated from \eqref{eq:bezier_curve} satisfies the following boundary conditions:
	\begin{equation} \label{eq:bezier_boundary_condition}
		\begin{gathered}
		x_n^n(0) = x_0, x_n^n(1) = x_n, \\
		\frac{d}{d\tau}x_n^n(\tau)\vert_{\tau = 0} = n x_0 V_0, 		\frac{d}{d\tau}x_n^n(\tau)\vert_{\tau = 1} = n x_n V_{n-1},\\
		\frac{D^2}{d\tau^2}x_n^n(\tau)\vert_{\tau = 0} = n(n-1) x_0 \Upupsilon^{-1}_0(V_1 - V_{0}), 	\\
		\frac{D^2}{d\tau^2}x_n^n(\tau)\vert_{\tau = 1} = n(n-1) x_n \Upupsilon^{-1}_1(V_{n-1} - V_{n-2}),\\
		\end{gathered}
	\end{equation}
	where $ V_{i} = \log(x_{i}^\top x_{i+1}) \in \mathfrak{so}(3), i = 0,1,\cdots,n-1$, $ \Upupsilon^{-1}_0  $ and $ \Upupsilon^{-1}_1 $ are respectively the inverses of the operators $  \Upupsilon_0(W)  = \int_{0}^{1} \exp(u \textup{ad} V_0)W du,  \Upupsilon_1(W) = \int_{0}^{1} \exp(- u \textup{ad} V_{n-1}) W du. $
\end{lem}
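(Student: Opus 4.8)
The strategy is to work with the recursive definition \eqref{eq:bezier_curve_recursive} directly and compute derivatives at the endpoints $\tau=0$ and $\tau=1$, where the geometry simplifies dramatically: at $\tau=0$ the intermediate curve $x_i^k(0)=x_i^{k-1}(0)=\cdots=x_i$ collapses to a single control point, and at $\tau=1$ it telescopes to the far end $x_n$. For the position claims, I would simply evaluate \eqref{eq:bezier_curve} at $\tau=0,1$ using $\exp(0)=I$ and $\log(x_i^\top x_i)=0$, peeling off one level of recursion at a time. For the first-derivative claims, I would differentiate the one-step relation $x_i^k(\tau)=x_{i-1}^{k-1}(\tau)\exp(\tau\log([x_{i-1}^{k-1}(\tau)]^\top x_i^{k-1}(\tau)))$ using the product rule on $SO(3)$ together with the derivative of the matrix exponential, then specialize to the endpoints. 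At $\tau=0$ the logarithm argument becomes $x_{i-1}^\top x_i$, so $\tfrac{d}{d\tau}x_i^k(0)=x_{i-1}\dot{(\cdot)}$ picks up a term $x_{i-1}\log(x_{i-1}^\top x_i)$ plus a contribution $\tfrac{d}{d\tau}x_{i-1}^{k-1}(0)$ from the left factor; unrolling this linear recursion in $k$ gives the factor $n$ and the expression $n x_0 V_0$. The $\tau=1$ case is symmetric, reading the recursion from the other end.

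For the second-derivative (geometric acceleration) claims, the key is to use the covariant derivative $\tfrac{D^2}{d\tau^2}$ rather than the naive second derivative, so that the intrinsic geometry of $SO(3)$ (with its bi-invariant metric and the associated Levi-Civita or the natural connection used in \cite{crouch1999casteljau}) is respected. I would differentiate the velocity expression obtained above once more, again specializing to $\tau=0$ (resp. $\tau=1$), and carefully track the two sources of $\tau$-dependence: the moving base point $x_{i-1}^{k-1}(\tau)$ and the argument of the exponential. The operator $\Upupsilon_0$ enters precisely as the integral form of $\tfrac{1-e^{-\mathrm{ad}_{V}}}{\mathrm{ad}_V}$-type expression (the left-trivialized differential of $\exp$), evaluated on the constant velocity field $V_0$, so its inverse $\Upupsilon_0^{-1}$ appears when solving for the acceleration in terms of the control-point differences $V_1-V_0$. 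Collecting the recursion in $k$ produces the combinatorial factor $n(n-1)$.

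The main obstacle is the second-derivative computation: one must be scrupulous about which notion of derivative is used at each stage (the ``total'' derivative of the ambient matrix-valued curve versus the covariant/geometric acceleration $\tfrac{D^2}{d\tau^2}$), and about correctly handling the derivative of $\exp$ and of $\log$ along a curve whose base point is itself moving. This is exactly where the operators $\Upupsilon_0,\Upupsilon_1$ are needed, and getting their placement and their arguments right — together with verifying that the cross-terms from differentiating the left factor contribute correctly through the connection — is the delicate part. Since this is a known result from \cite{crouch1999casteljau}, I would lean on that reference for the detailed covariant-derivative bookkeeping and present here only the endpoint specialization and the unrolling of the recursion, which is where the factors $n$ and $n(n-1)$ and the specific control-point combinations come from.
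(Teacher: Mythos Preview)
The paper does not prove this lemma at all: it is imported verbatim from \cite{crouch1999casteljau} and simply used as a tool, so there is nothing in the paper to compare against. Your outline --- evaluate the recursion at the endpoints, differentiate the one-step relation, unroll in $k$ to pick up the factors $n$ and $n(n-1)$, and invoke the left-trivialized differential of $\exp$ to account for $\Upupsilon_0,\Upupsilon_1$ in the covariant second derivative --- is exactly the computation carried out in that reference, so your plan is the standard one.

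One small slip worth fixing before you write it up: your telescoping at $\tau=0$ is stated incorrectly. From \eqref{eq:bezier_curve_recursive} one has $x_i^k(0)=x_{i-1}^{k-1}(0)$, not $x_i^{k-1}(0)$, so by induction $x_i^k(0)=x_{i-k}$ (and in particular $x_n^n(0)=x_0$); dually $x_i^k(1)=x_i^{k-1}(1)=x_i$. This matters when you unroll the velocity recursion: at $\tau=0$ the left factor contributes $\tfrac{d}{d\tau}x_{i-1}^{k-1}(0)$ evaluated at the base point $x_{i-k}$, and the logarithm term is $\log(x_{i-k}^\top x_{i-k+1})=V_{i-k}$, so the recursion is $\dot x_i^k(0)=\dot x_{i-1}^{k-1}(0)+x_{i-k}V_{i-k}$ in left-trivialized form. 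After $n$ steps only $V_0$ survives with multiplicity $n$, giving $n x_0 V_0$; the $\tau=1$ end works symmetrically with $V_{n-1}$. Getting this indexing right is also what makes the second-derivative computation produce $V_1-V_0$ (resp.\ $V_{n-1}-V_{n-2}$) rather than some other combination.
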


 For any $ W \in \mathfrak{so}(3) $, the operator $ \Upupsilon_0: \mathfrak{so}(3) \to \mathfrak{so}(3) $ is given explicitly by the power series $\int_{0}^{1} \exp(u \textup{ad} V_0) W du = \int_{0}^{1} W  + u[V_0,W] + \frac{u^2}{2!}[V_0,[V_0,W]] + \cdots du$.
The operator $ \Upupsilon_1(W)$ is given in a similar way. It can be easily verified that both operators $ \Upupsilon_0,  \Upupsilon_1 $ are linear transformations on $ \mathfrak{so}(3) $, i.e., $ \Upupsilon_i(aW) = a \Upupsilon_i(W), \Upupsilon_i(W + V) =  \Upupsilon_i(W)+ \Upupsilon_i(V)$ for $ W,V \in \mathfrak{so}(3), a \in \mathbb{R}, i= 1,2 $. In \cite{crouch1999casteljau}, it is shown that the inverse operator $ \Upupsilon_i^{-1} $  always exists for $i=1, 2$. 

Lemma \ref{lem:De_Cas} introduces the analytical expression of the velocity and geometric acceleration at the boundary point that will facilitate our construction of the reference trajectory with $ C^2 $ continuity.

\begin{rem}
Note that the De Casteljau algorithm in \eqref{eq:bezier_curve_recursive} is not well-defined for arbitrary points $ x_0, x_1, \cdots, x_n $ on $ SO(3) $ when $ \mathbf{ tr}([x_{i-1}^{k-1}(\tau)]^\top x_i^{k-1}(\tau)) =  -1 $ occurs.

\end{rem}

\subsection{B\'ezier curve construction in one cell}
In this subsection, we demonstrate  the procedure to design the controlling points and the properties of the constructed B\'ezier curve. 

Given a cell $ S_i, i \in \mathcal{N}$ with center point $ x_2 $ and two arbitrary points $ x_0,x_4 \in S_i $, the curve $ c_{x_0,x_2,x_4}:[0,1] \to SO(3) $ is generated as follows:  first add controlling points $x_1, x_3$ as the midpoints of $x_0,x_2$, and $x_2,x_4$, respectively; then applying De Casteljau algorithm with $n = 4$. The construction is given in Algorithm \ref{alg:1}:
\begin{algorithm} [h!]
	\caption{B\'ezier curve construction in one cell.} \label{alg:1}
	\begin{algorithmic}[1]
		\renewcommand{\algorithmicrequire}{\textbf{Input:}}
		\renewcommand{\algorithmicensure}{\textbf{Output:}}
		\REQUIRE start point $ x_0 $, cell center $ x_2 $, end point $ x_4 $
		\ENSURE  curve $ c_{x_0,x_2,x_4} $
		\STATE $  x_1 \leftarrow x_0 \exp(1/2 \log(x_{0}^\top x_{2}))$
		\STATE $  x_3 \leftarrow  x_2 \exp(1/2 \log(x_{2}^\top x_{4}))$
		\STATE calculate a sequence of curves recursively as in \eqref{eq:bezier_curve_recursive} given the ordered points $ \{x_0,x_1,x_2,x_3,x_4\} $ with $ n = 4 $ 
		\RETURN $  c_{x_0,x_2,x_4} \leftarrow x_4^4$
	\end{algorithmic} 
\end{algorithm}

 Noticing that $ V_0 =  \log(x_0^\top x_1) = 1/2\log(x_0^\top x_2), V_1 = \log(x_1^\top x_2) =  1/2\log(x_0^\top x_2) $, we have $ V_0 = V_1 $. Similarly, $ V_2 = V_3 $. From Lemma 3, we can easily check that $ c_{x_0,x_2,x_4}(0) = x_0, c_{x_0,x_2,x_4}(1) = x_4 $. The velocities at the boundary point are
\begin{equation} \label{eq:BevierVel}
\begin{gathered}
 \tfrac{d}{d\tau} c_{x_0,x_2,x_4} (0) = 2x_0\log(x_0^\top x_2), \\
  \tfrac{d}{d\tau} c_{x_0,x_2,x_4} (1) = 2x_4\log(x_2^\top x_4), \\
\end{gathered}
\end{equation}
and the geometric accelerations are given by 
\begin{equation} \label{eq:BezierAcc}
\begin{gathered}
 \tfrac{D^2}{d\tau^2} c_{x_0,x_2,x_4} (0) = 12x_0  \Upupsilon^{-1}_0(V_1 - V_0 ) = 0, \\
   \tfrac{D^2}{d\tau^2} c_{x_0,x_2,x_4} (1) = 12x_n  \Upupsilon^{-1}_1(V_{3} - V_{2} ) = 0, \\
\end{gathered}
\end{equation}
noticing that $ V_1 - V_0 = V_3 - V_2 = 0 $ and $ \Upupsilon_0^{-1}, \Upupsilon_1^{-1} $ being linear transformations.

In addition to these  explicitly expressed velocities and geometric accelerations at the endpoints, we have another nice property of the constructed curve $ c_{x_0,x_2,x_4} $.

\begin{prop} \label{prop:CurveInCell}
Given arbitrary $ n+1 $ ordered points $ \{x_0, x_1, x_2,$  $\cdots,x_n \}$ such that $ x_i \in S$, $ i = 0,1,\cdots,n $, where $ S $ is a ball region in $ SO(3) $ with radius $ \theta \in (0,\pi/2) $. The B\'ezier curve $ x_n^n(\tau) $ generated from \eqref{eq:bezier_curve} always exists and evolves in $ S $, i.e., $ x_n^n(\tau) \in S, 0 \le \tau \le 1. $
\end{prop}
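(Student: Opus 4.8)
The plan is to prove the statement by induction on the level $k$ of the De Casteljau recursion in \eqref{eq:bezier_curve_recursive}, carrying as an inductive invariant that \emph{every} intermediate curve produced at level $k$ is well defined and takes values in $S$ for all $\tau \in [0,1]$. The geometric engine is Lemma~\ref{lem:convex_cell}, whose conclusion depends only on the underlying ball having radius strictly below $\pi/2$ and hence applies verbatim to the ball $S$ considered here. The one subtlety is that each step of \eqref{eq:bezier_curve_recursive} takes the matrix logarithm of $[x_{i-1}^{k-1}(\tau)]^\top x_i^{k-1}(\tau)$, which is defined only when this product has trace $\neq -1$; this ``always exists'' part must be established simultaneously with the containment, since the logarithm at each level is evaluated at a point produced by the previous level.

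I would begin with a metric observation. If $P, Q \in S$ with center $c$, then by the triangle inequality for the angular distance on $SO(3)$, $\textup{d}(P,Q) \le \textup{d}(P,c) + \textup{d}(c,Q) < 2\theta < \pi$. Since $\textup{d}(P,Q)$ equals the rotation angle $\theta(P^\top Q) = \arccos((\mathbf{tr}(P^\top Q)-1)/2)$ and this angle is strictly below $\pi$, we get $\mathbf{tr}(P^\top Q) \neq -1$; consequently $\log(P^\top Q)$, the geodesic $\tau \mapsto P\exp(\tau\log(P^\top Q))$, and the interpolation map in \eqref{eq:bezier_curve_recursive} are all well defined and continuous (in fact smooth) in $\tau$.

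For the induction itself, the base case $k = 0$ is immediate: $x_i^0(\tau) \equiv x_i \in S$ by hypothesis. For the inductive step, assume that for some $k \ge 1$ each $x_j^{k-1}(\cdot)$, $j \in \{k-1,\dots,n\}$, is well defined with $x_j^{k-1}(\tau) \in S$ for all $\tau$. Fix $i \in \{k,\dots,n\}$ and $\tau \in [0,1]$, and set $P := x_{i-1}^{k-1}(\tau)$, $Q := x_i^{k-1}(\tau)$; both lie in $S$, so by the metric observation $\log(P^\top Q)$ exists and $x_i^k(\tau) = P\exp(\tau\log(P^\top Q))$ is exactly the point at parameter $\tau$ on the geodesic from $P$ to $Q$. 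Lemma~\ref{lem:convex_cell}, applied with $R_{i1} = P$ and $R_{i2} = Q$, then gives $x_i^k(\tau) \in S$. As $i$ and $\tau$ were arbitrary, the invariant holds at level $k$, and carrying the induction up to $k = n$ with $i = n$ yields $x_n^n(\tau) \in S$ for all $\tau \in [0,1]$.

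The main (and essentially only) obstacle is the bookkeeping just described: one cannot separately prove ``stays in $S$'' and then ``is well defined,'' because each logarithm is taken at a freshly interpolated point, so both properties have to be threaded through a single inductive statement. A minor additional point worth stating is the continuity (or $C^\infty$ regularity) of $\tau \mapsto x_i^k(\tau)$: on the open set where the trace is $\neq -1$ the logarithm is smooth, so each interpolation step is smooth, and finite compositions preserve this.
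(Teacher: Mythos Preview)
Your proof is correct and follows essentially the same route as the paper: induction on the De Casteljau level $k$, carrying as invariant that all intermediate curves stay in $S$, using the triangle-inequality bound $\textup{d}(P,Q)<2\theta<\pi$ to ensure the logarithm is well defined and then invoking Lemma~\ref{lem:convex_cell} for geodesic convexity. The only difference is that you spell out the metric observation and the smoothness remark more explicitly than the paper does.
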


\begin{proof}
This can be shown by induction on $ k $. For $k = 0$, as $ x_i^{0} (\tau) = x_i $, we have $ x_i^{0} (\tau) \in S $ for $ i =  0,1,\cdots,n$ and $ \textup{d}(x_{i-1}^{0} (\tau) , x_i^{0} (\tau) ) < \pi $, which means $ \log([x_{i-1}^{0}(\tau)]^\top x_i^{0}(\tau)) $ is well-defined and from Lemma \ref{lem:convex_cell}, $ x_i^1(\tau) \in S $. 
 For any $ k \in \{ 1,2,\cdots, n\} $, assume that $ x_{i}^{k-1}(\tau) \in S $ for $ i = k-1,\cdots,n  $, then $ x_i^k(\tau) = x_{i-1}^{k-1}(\tau)  \exp(\tau \log([x_{i-1}^{k-1}(\tau)]^\top x_i^{k-1}(\tau))) $
is well-defined as $ \text{d}( x_{i-1}^{k-1}(\tau), x_i^{k-1}(\tau))<\pi $. As $ x_i^k(\tau) $ lies in the geodesic path between $ x_{i-1}^{k-1}(\tau) $ and $ x_{i}^{k-1}(\tau) $,  Lemma \ref{lem:convex_cell} dictates that  $ x_i^k(\tau) \in S $. Thus, by induction, we obtain $ x_n^n(\tau) \in S, 0 \le \tau \le 1. $
\end{proof}
 A straightforward conclusion is that the curve $ c_{x_0,x_2,x_4} $ constructed from Algorithm \ref{alg:1} is well-defined and evolves within the cell.

\subsection{B\'ezier curve construction in a set of cells}
Now we apply Algorithm \ref{alg:1} to generate a curve evolving among a set of cells. In the following, we use the notation $c_{x_0,x_2,x_4}(\tau):[0,1]\to SO(3)$ to denote the curve generated from Algorithm \ref{alg:1} given the three points $x_0, x_2, x_4$.

\begin{prop} \label{prop:BezierCurveAll}
Assume that $R_0, R_f $ are the initial and target orientations, respectively, $R_0 \in S_1$, $R_f \in S_m$, and assume there exists a sequence of cells $S_1 S_2 \cdots S_m$ such that $S_i  S_{i+1}$ are adjacent for $i=1, 2, \cdots, m-1$.  Then, a curve $c:\mathbb{R} \supset[0,m] \to SO(3)$ defined as
\begin{equation} \label{eq:curve_all}
c(\tau) = \left\{ \begin{array}{ll}
c_{R_0, R_1, R_{1,2}} ( \tau), & \tau\in [0,1), \\
c_{R_{i-1,i}, R_{i}, R_{i,i+1}} ( \tau -i + 1), & \tau \in [i-1,i), \\
   & \hspace*{-28px} i \in   \{ 2,3, \cdots,m-1\},\\
c_{R_{m-1,m}, R_m, R_f} ( \tau - m +1), & \tau \in [m-1,m],
\end{array}\right.
\end{equation}
where  $R_i $ is the center of cell $ S_i $, $R_{i,i+1} := R_{i} \exp(1/2 \log( R_{i}^\top R_{i+1}))$, has the following properties:
\begin{enumerate}
\item[i.] 	$c(0) = R_0, c(m) = R_f$; 
\item[ii.]  $c(\tau)$ is a $C^2$ curve; 
\item[iii.] $c(\tau) \in \cup_{i = 1}^{m} S_i$ for $\tau \in [0,m]$. 
\end{enumerate}
\end{prop}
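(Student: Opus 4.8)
The plan is to establish the three properties in order, leaning on the per-cell analysis already developed. Property i is immediate: by the boundary conditions derived from Lemma~\ref{lem:De_Cas} applied to Algorithm~\ref{alg:1}, each piece $c_{x_0,x_2,x_4}$ satisfies $c_{x_0,x_2,x_4}(0)=x_0$ and $c_{x_0,x_2,x_4}(1)=x_4$. Evaluating \eqref{eq:curve_all} at $\tau=0$ gives $c(0)=c_{R_0,R_1,R_{1,2}}(0)=R_0$, and the limit as $\tau\to m$ gives $c(m)=c_{R_{m-1,m},R_m,R_f}(1)=R_f$.

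For property iii, I would apply Proposition~\ref{prop:CurveInCell} to each piece. The piece defined on $[i-1,i)$ is $c_{R_{i-1,i},R_i,R_{i,i+1}}$, which by Algorithm~\ref{alg:1} is a B\'ezier curve with control points $R_{i-1,i}, $ (midpoint of $R_{i-1,i},R_i$)$, R_i, $ (midpoint of $R_i,R_{i,i+1}$)$, R_{i,i+1}$. I need all five control points inside the cell $S_i$. The centers $R_i$ are trivially in $S_i$; the points $R_{i,i+1}=R_i\exp(\tfrac12\log(R_i^\top R_{i+1}))$ lie on the geodesic from $R_i$ to $R_{i+1}$ at parameter $\tfrac12$, and since $S_i,S_{i+1}$ are adjacent, property~iii of the sampling conditions gives $\textup{d}(R_i,R_{i+1})<2\theta$, so the midpoint satisfies $\textup{d}(R_i,R_{i,i+1})<\theta$, hence $R_{i,i+1}\in S_i$; similarly $R_{i-1,i}\in S_i$ (it also lies in $S_i$, being the geodesic midpoint between $R_{i-1}$ and $R_i$, which is within $\theta$ of $R_i$). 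The two added midpoints then lie in $S_i$ by Lemma~\ref{lem:convex_cell}. So Proposition~\ref{prop:CurveInCell} yields that the $i$-th piece stays in $S_i\subset\cup_{i=1}^m S_i$, and the endpoint pieces are handled the same way using $R_0\in S_1$, $R_f\in S_m$. Taking the union over all $\tau$ gives iii.

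Property ii is the main obstacle and where I would spend most of the effort. I must check $C^0$, $C^1$, and $C^2$ continuity at each junction $\tau=i$ for $i=1,\dots,m-1$. Continuity ($C^0$) follows from property~i applied to the two adjacent pieces: both the left limit of the $i$-th piece and the value of the $(i{+}1)$-th piece at their shared junction equal $R_{i,i+1}$ (for interior junctions) or the appropriate shared control point at the ends. For $C^1$ and $C^2$ I would use the explicit endpoint formulas \eqref{eq:BevierVel} and \eqref{eq:BezierAcc}: every piece constructed by Algorithm~\ref{alg:1} has zero geometric acceleration at both endpoints, so the acceleration matches (both sides are $0$) at every junction, giving $C^2$ provided $C^1$ holds. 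For $C^1$ at the junction between $c_{R_{i-1,i},R_i,R_{i,i+1}}$ and $c_{R_{i,i+1},R_{i+1},R_{i+1,i+2}}$: the right-endpoint velocity of the first is $2\,R_{i,i+1}\log(R_i^\top R_{i,i+1})$ and the left-endpoint velocity of the second is $2\,R_{i,i+1}\log(R_{i,i+1}^\top R_{i+1})$; by the definition $R_{i,i+1}=R_i\exp(\tfrac12\log(R_i^\top R_{i+1}))$ one computes $\log(R_i^\top R_{i,i+1})=\tfrac12\log(R_i^\top R_{i+1})$ and likewise $\log(R_{i,i+1}^\top R_{i+1})=\tfrac12\log(R_i^\top R_{i+1})$, so the two velocities agree. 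A subtlety I would flag is that $c$ is defined on $[0,m]$ with unit-speed reparametrization shifts $\tau\mapsto\tau-i+1$, so no chain-rule scaling factors appear at junctions; and for well-definedness of all the logarithms I would note that all relevant pairs of rotations have angular distance below $\pi$ because the relevant points lie in a common ball of radius $\theta<\pi/2$, so $\mathbf{tr}(\cdot)\neq-1$ throughout and the curve is genuinely well-defined, as already guaranteed piecewise by Proposition~\ref{prop:CurveInCell}.
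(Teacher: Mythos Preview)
Your proposal is correct and follows essentially the same approach as the paper: verify the endpoint values directly, match left/right velocities at each junction via the identity $\log(R_i^\top R_{i,i+1})=\log(R_{i,i+1}^\top R_{i+1})=\tfrac12\log(R_i^\top R_{i+1})$, observe that the geometric accelerations vanish at all junctions by \eqref{eq:BezierAcc}, and then invoke Proposition~\ref{prop:CurveInCell} piecewise for property~iii. Your write-up is in fact slightly more thorough than the paper's, since you explicitly verify that the input points $R_{i-1,i}$ and $R_{i,i+1}$ lie in $S_i$ (via $\textup{d}(R_i,R_{i\pm1})<2\theta$) before applying Proposition~\ref{prop:CurveInCell}, a step the paper leaves implicit.
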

\begin{proof}
 Property i can be straightforwardly verified by substituting $ \tau = 1, 2, \cdots,m-1 $ and the fact that $ c_{x_0, x_2,x_4}( 0) = x_0, c_{x_0, x_2,x_4} (1) = x_4 $. As $ c(\tau) $ is a continuous and piecewise smooth curve, we need to check the  left/right velocity/acceleration at $ \tau = 1, 2, \cdots,m-1 $. By differentiating \eqref{eq:curve_all} and using \eqref{eq:BevierVel}, we obtain, for $  i\in \{ 1, 2,3, \cdots,m-1\}$, $ \frac{dc (\tau)}{d\tau} \vert_{\tau= i^{-}} =
 2 R_{i,i+1} \log( R_i^\top R_{i,i+1} ) , \frac{dc (\tau)}{d\tau} \vert_{\tau= i^{+}} =
 2 R_{i,i+1} \log( R_{i,i+1}^\top R_{i+1} ). $
 Note that since $ R_{i,i+1} = R_{i} \exp(1/2 \log( R_{i}^\top R_{i+1})) $, we get $  \log(R_{i}^\top R_{i,i+1}  ) =  1/2 \log( R_{i}^\top R_{i+1}),$
 $  \log( R_{i,i+1}^\top R_{i+1} ) = \log( \exp(-1/2 \log( R_{i}^\top R_{i+1})) R_{i}^\top   R_{i+1}) = 1/2 \log( R_{i}^\top R_{i+1})$. As the left and right derivative coincide at $ \tau = 1, 2, \cdots,m-1 $, $ c(\tau) $ is at least a $ C^{1} $ curve. From \eqref{eq:BezierAcc}, the geometric acceleration at $ \tau = 1, 2, \cdots,m-1 $ satisfies $   \frac{D^2 c (\tau)} {d\tau^2} \vert_{\tau= i^{-}}  =  \frac{D^2 c (\tau)} {d\tau^2} \vert_{\tau= i^{+}} = 0.  $
Thus, $ c(\tau) $ is a $ C^{2} $ curve.

 Property iii can be verified piecewise. Rewrite the curve segments in \eqref{eq:curve_all} in the form of $ c_{R_i^0,R_i,R_i^2}(\tau), \tau\in [0,1], i\in \{1,2,\cdots, m\} $. Proposition \ref{prop:CurveInCell} implies $  c_{R_i^0,R_i,R_i^2}(\tau) \in S_i $. Thus, the concatenation of the curve segments is contained in the union of the cells, which completes the proof.

\end{proof}

\subsection{Time re-parameterization }

 In order to obtain a reference trajectory that solves Problem 1, let $ \tau $ be a smooth function of time, i.e., $ \tau: \mathbb{R}_{\ge}\to [0,m] $  that rescales the trajectory $ c:[0,m] \to SO(3)$ to the time domain $ \gamma: =c  \circ \tau : \mathbb{R}_{\ge} \to SO(3) $.

Numerous smooth transition functions are known. Here we adopt one from \cite{lindemann2009simple} that fits our needs.
\begin{equation} \label{eq:smoothing_function}
s(x) = \left\{ \begin{array}{ll}
0 & x\in ( -\infty,0), \\
 \frac{\rho(x)}{\rho(x) + \rho(1 - x)} & x \in [0,1), \\
1 & x \in [1,\infty)
\end{array}\right.
\end{equation}
with $ \rho(x) := (1/x)e^{-1/x} $. 

%
%

\begin{thm}
 Given a sequence of  cells $ S_1, S_2, \cdots, S_m $ such that $ R_0 \in S_1, R_f \in S_m, S_i S_{i+1} $ are adjacent. Choose $ c:[0,m] \to SO(3)$ defined in \eqref{eq:curve_all} and $ \tau(t) := m s(t/T) $ with $ s(\cdot) $ in \eqref{eq:smoothing_function}. The curve $ \gamma: = c \circ \tau: \mathbb{R}_{\ge} \to SO(3) $  is continuously differentiable, and satisfies $ \gamma (0) = R_0, \gamma (T) = R_f,  d\gamma /dt(0) = d\gamma /dt(T) = 0, D^2\gamma /dt^2 (0) = D^2\gamma /dt^2(T) = 0, \gamma (t) \in \cup_{i \in \mathcal{N}} S_i $ for $ t \ge 0 $.
\end{thm}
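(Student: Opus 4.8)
The plan is to assemble the theorem from Proposition~\ref{prop:BezierCurveAll} together with the elementary analytic properties of the transition function $s$ in \eqref{eq:smoothing_function}, so that essentially no new geometric work is needed. First I would record what we need about $s$, and hence about $\tau(t)=m\,s(t/T)$. The function $\rho(x)=(1/x)e^{-1/x}$, extended by $\rho(x)=0$ for $x\le 0$, is $C^\infty$ on $\mathbb{R}$ with all derivatives vanishing at $x=0$ (the standard flat-function argument); since $\rho(x)+\rho(1-x)>0$ for every $x\in\mathbb{R}$, the quotient $s$ is $C^\infty$, is identically $0$ on $(-\infty,0]$ and identically $1$ on $[1,\infty)$, takes values in $[0,1]$, and therefore satisfies $s'(0)=s''(0)=s'(1)=s''(1)=0$. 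Consequently $\tau:\mathbb{R}_{\ge}\to[0,m]$ is $C^\infty$, with $\tau(0)=0$, $\tau(t)=m$ for all $t\ge T$, $\tau(t)\in[0,m]$ for all $t\ge 0$, and $\tau'(0)=\tau''(0)=\tau'(T)=\tau''(T)=0$.

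Next I would read off the boundary values and the safety property. Since $\tau(0)=0$ and $\tau(t)=m$ for $t\ge T$, Proposition~\ref{prop:BezierCurveAll}(i) gives $\gamma(0)=c(0)=R_0$ and $\gamma(t)=c(m)=R_f$ for all $t\ge T$, in particular $\gamma(T)=R_f$. Since $\tau(t)\in[0,m]$ for all $t\ge 0$, Proposition~\ref{prop:BezierCurveAll}(iii) gives $\gamma(t)=c(\tau(t))\in\bigcup_{i=1}^{m} S_i=\bigcup_{i\in\mathcal{N}}S_i$ (recalling the cells were relabeled so that $\mathcal{N}=\{1,\dots,m\}$).

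For the regularity and the vanishing endpoint velocity and acceleration I would invoke the chain rule for the reparametrized curve on $SO(3)$. Because $c$ is $C^2$ by Proposition~\ref{prop:BezierCurveAll}(ii) and $\tau$ is $C^\infty$, the composition $\gamma=c\circ\tau$ is $C^2$ on $\mathbb{R}_{\ge}$. The velocity satisfies $\tfrac{d\gamma}{dt}(t)=\tau'(t)\,\tfrac{dc}{d\tau}(\tau(t))$, and the geometric acceleration of a reparametrized curve obeys $\tfrac{D^2\gamma}{dt^2}(t)=\tau'(t)^2\,\tfrac{D^2 c}{d\tau^2}(\tau(t))+\tau''(t)\,\tfrac{dc}{d\tau}(\tau(t))$, by compatibility of the connection on $SO(3)$, following the conventions of \cite{bullo2004geometric}. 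Evaluating at $t=0$ and $t=T$ and using $\tau'(0)=\tau''(0)=\tau'(T)=\tau''(T)=0$ makes both expressions vanish, yielding $\tfrac{d\gamma}{dt}(0)=\tfrac{d\gamma}{dt}(T)=0$ and $\tfrac{D^2\gamma}{dt^2}(0)=\tfrac{D^2\gamma}{dt^2}(T)=0$; the same computation re-confirms the $C^2$ matching at $t=T$, where $\gamma$ switches to the constant value $R_f$. (Note that, by \eqref{eq:BevierVel}, $\tfrac{dc}{d\tau}(0)$ is generically nonzero, so it is really the flatness of $\tau$ at the endpoints, not of $c$, that produces the boundary conditions.)

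I expect the only delicate points to be (a) justifying carefully that $s$, and hence $\tau$, is genuinely $C^\infty$ with the stated flatness at $x=0$ and $x=1$ — this is the one place one must argue rather than substitute — and (b) stating the covariant-acceleration transformation law for $c\circ\tau$ in the intrinsic $SO(3)$ formalism used in the paper rather than in ambient coordinates; everything else reduces to plugging the properties of $\tau$ into Proposition~\ref{prop:BezierCurveAll} and \eqref{eq:BezierAcc}.
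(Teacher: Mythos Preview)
Your proposal is correct and follows essentially the same approach as the paper's proof: combine the $C^2$ regularity and endpoint/safety properties of $c$ from Proposition~\ref{prop:BezierCurveAll} with the smoothness and endpoint flatness of $s$ (cited from \cite{lindemann2009simple}) to read off all the claimed properties of $\gamma=c\circ\tau$. Your write-up is in fact more explicit than the paper's, which does not spell out the covariant chain rule $\tfrac{D^2\gamma}{dt^2}=\tau'^2\,\tfrac{D^2c}{d\tau^2}+\tau''\,\tfrac{dc}{d\tau}$ but simply asserts the conclusions.
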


\begin{proof}
	Since $ c $ is $ C^2 $ continuous and $ s(\cdot) $ is smooth, $ \gamma $ is also $ C^2 $ continuous. From Proposition \ref{prop:BezierCurveAll} and  the properties of $ s(\cdot) $, it can be checked that  $ \gamma (0) = R_0, \gamma (T) = R_f, D^2\gamma /dt^2 (0) = D^2\gamma /dt^2(T) = 0, \gamma (t) \in \cup_{i \in \mathcal{N}} S_i $ for $ t \ge 0 $. Noticing the fact that $ s(x) $ is a smooth function with $ d^i s/d x^i (0) = d^i s/d x^i (1) =  0 $ for any integer $ i $ (proven in \cite{lindemann2009simple}), we conclude that $  d\gamma /dt(0) = d\gamma /dt(T) = 0$.
\end{proof}
 
 \begin{rem}
 	Although in this work we set the initial and terminal velocities to be zero, the presented method can be directly extended to solve interpolation problems with non-zero velocity boundary conditions by manipulating the controlling points in the cells $ S_0, S_m $ and the time-reparametrization function $ s(\cdot) $. This is a straightforward extension and details are omitted here. 
 \end{rem}
 
  We demonstrate in Fig. \ref{fig:trajectory_illustration} the constructed reference curve $c \circ \tau $ (red line) and the curve from \cite{tan2020constrained} (blue dash line) for comparison. The data is given in the simulation section. It is seen that a smoother attitude maneuver is obtained compared to that of \cite{tan2020constrained}. Figure \ref{fig:traj_comp_vel_magtitude}  further shows that the trajectory in \cite{tan2020constrained}  needs to reach zero velocities at intermediate points, which is avoided in the new construction. The maximal angular velocity magnitude has also decreased compared to that of \cite{tan2020constrained}.

\begin{figure*}[h]
	\centering
	\begin{subfigure}[t]{0.31\linewidth}
	    \includegraphics[width=\linewidth]{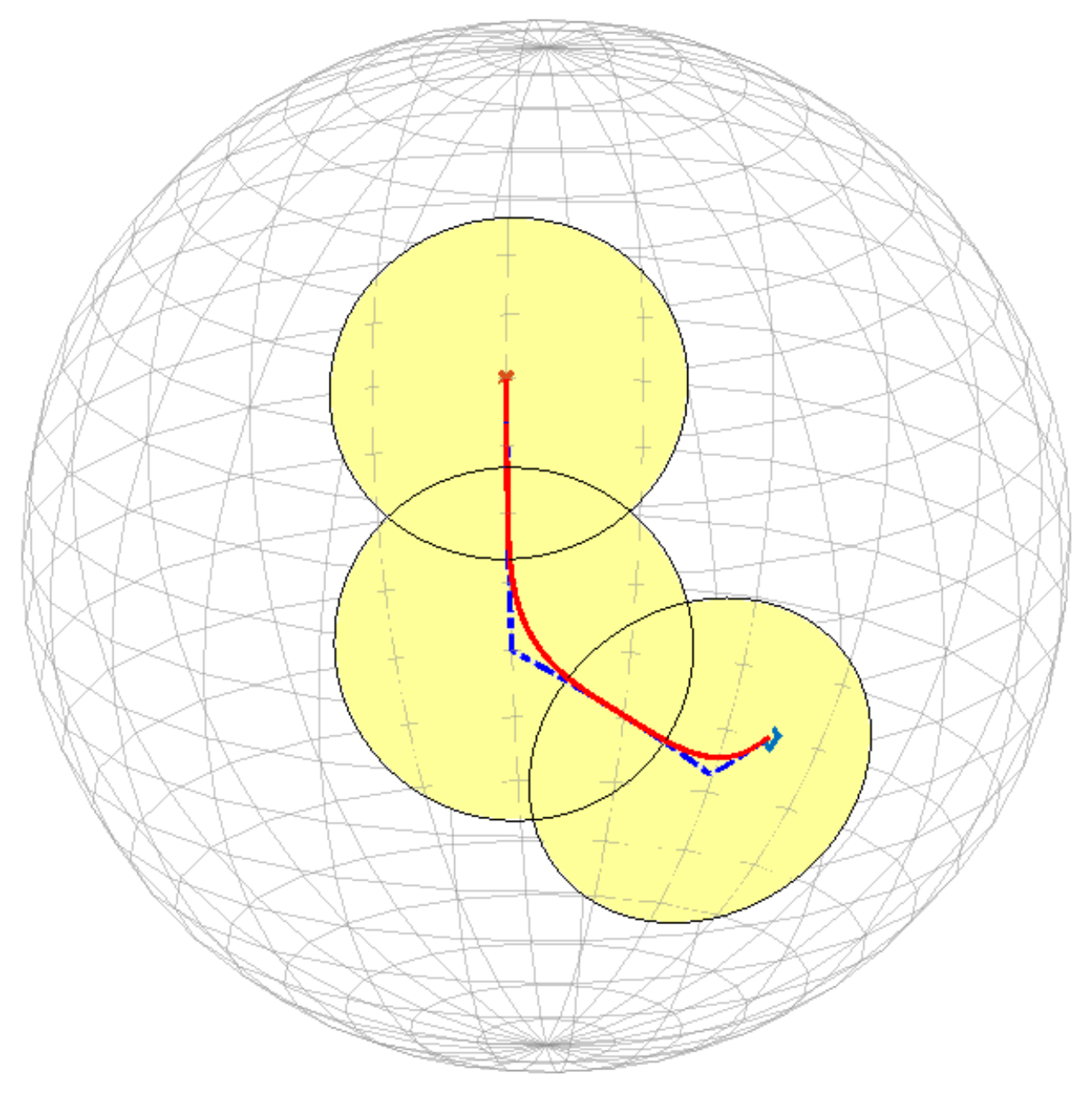}
	 	    \caption{  Trajectory of $x$-axis. }   
	  \end{subfigure}
  \begin{subfigure}[t]{0.31\linewidth}
  	\centering\includegraphics[width=\linewidth]{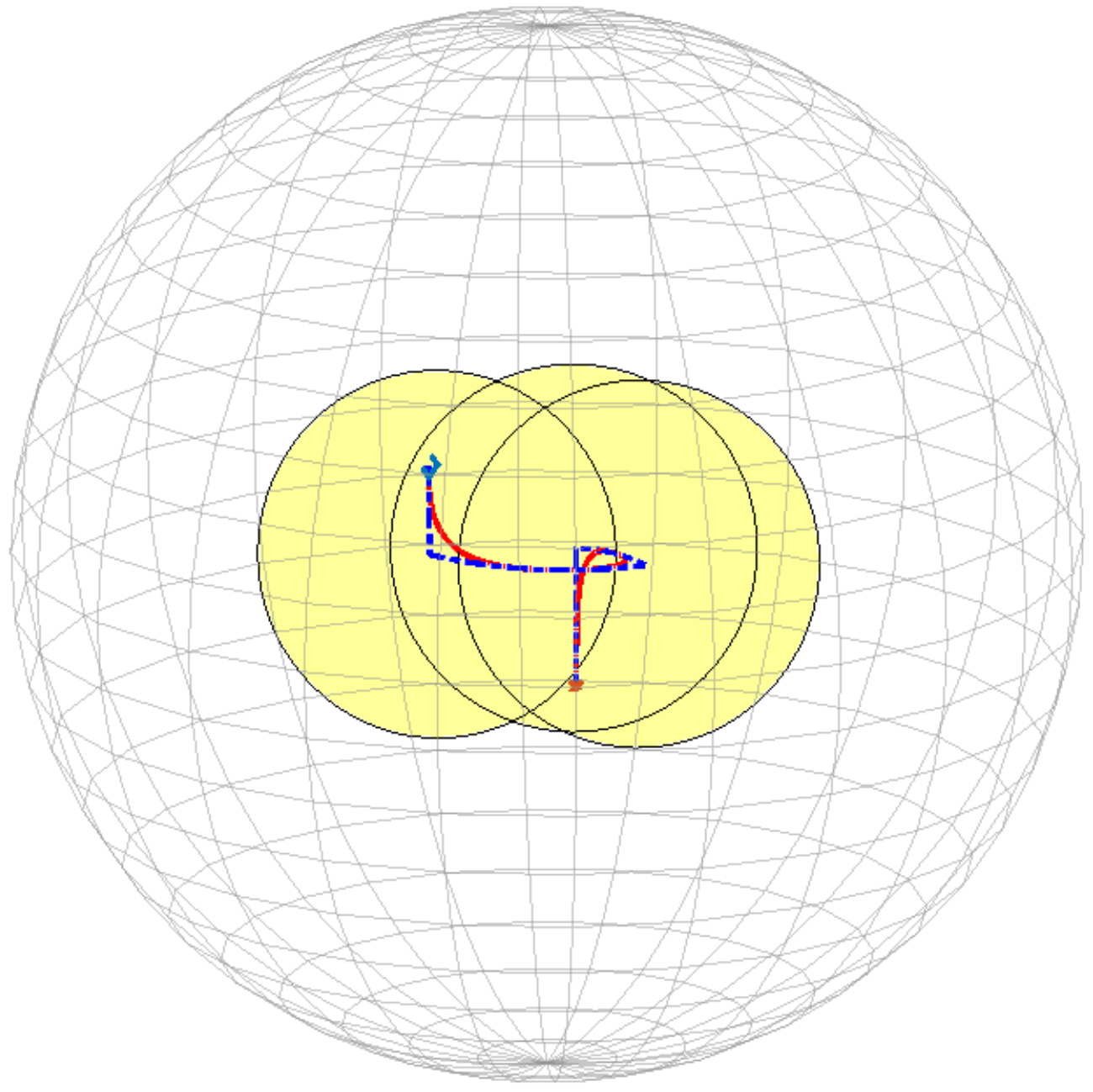}
  	\caption{ Trajectory of $y$-axis.}
  \end{subfigure}
	\begin{subfigure}[t]{0.31\linewidth}
		\centering\includegraphics[width=\linewidth]{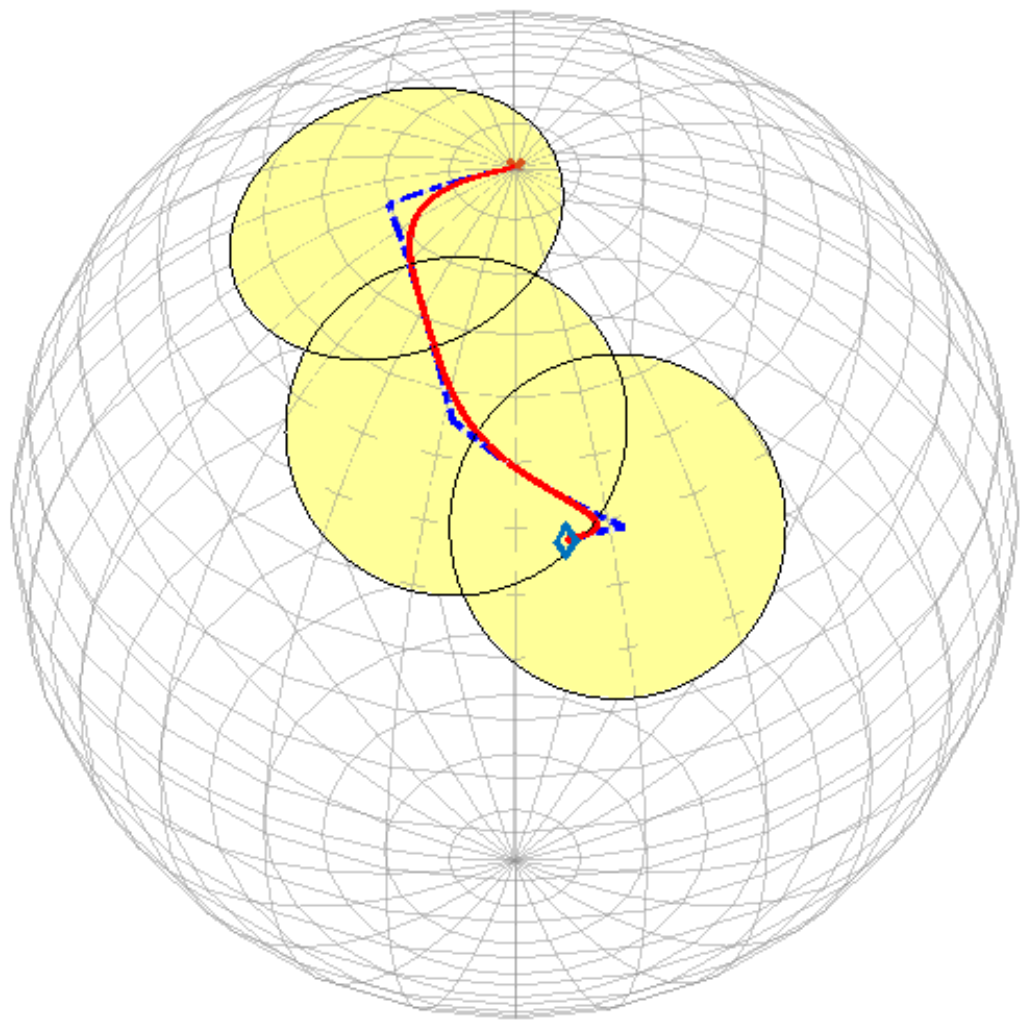}
		\caption{ Trajectory of $z$-axis.}
	\end{subfigure}
	\caption{  Comparison of the trajectories of body-fixed axes: $ c\circ \tau $  in red and the one from \cite{tan2020constrained} in blue. }
	\label{fig:trajectory_illustration}
\end{figure*}

\begin{figure}[h]
	\centering
	\includegraphics[width=\linewidth]{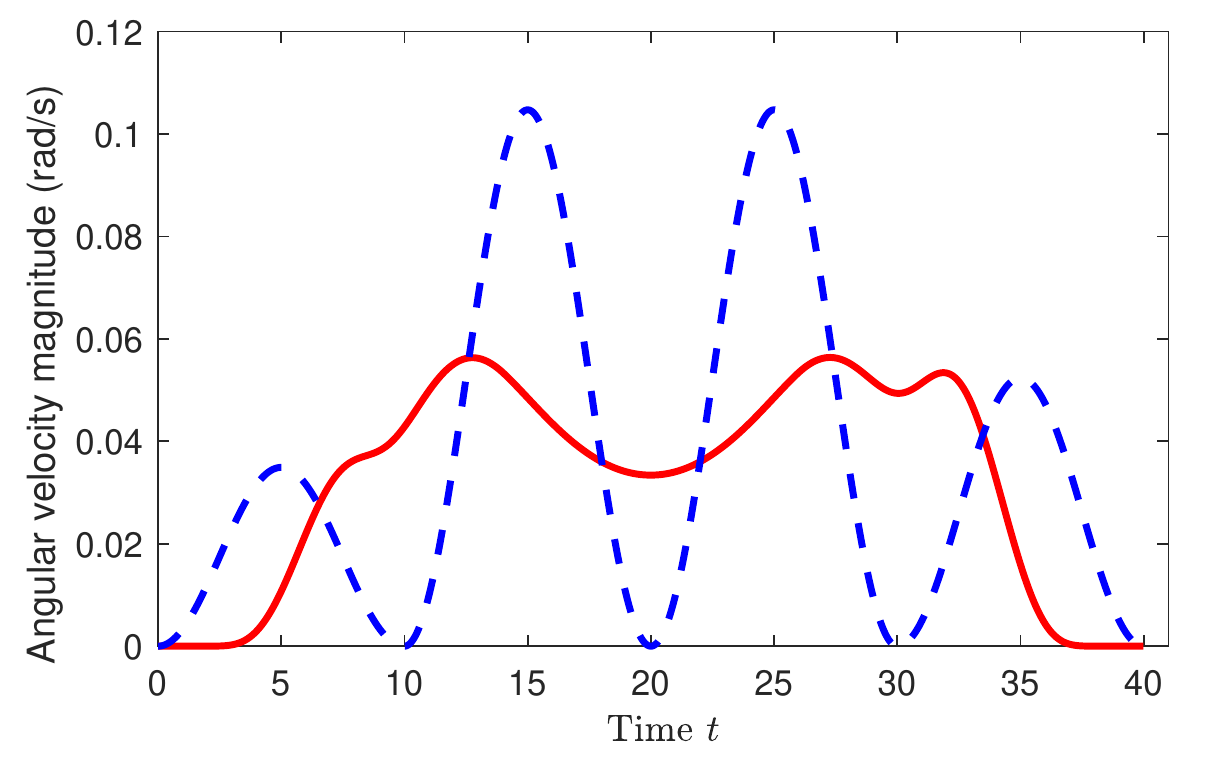}
		\caption{Time histories of the reference angular velocity magnitude of the trajectory $ c\circ \tau $ in red and the one from \cite{tan2020constrained} in blue.}
		\label{fig:traj_comp_vel_magtitude}
\end{figure}

\section{ Control barrier function design } \label{sec:control_barrier}

In this section, we present the procedure to construct a zeroing control barrier function that guarantees the actual attitude trajectory evolves within $ \cup_{i\in \mathcal{N}} S_i $. 

 We start the barrier function design from one cell. For an arbitrary cell  $S_i$, define a function $ r_i: SO(3)\to \mathbb{R} $
\begin{equation} \label{eq:ri}
r_i(R) = \epsilon - \Vert R_i- R \Vert_F^2/2,
\end{equation}
where constant $ \epsilon:= 4 \sin^2(\theta/2)$, $ R_i , \theta $ are the cell center and radius of cell $S_i$, respectively. It is easy to show that $ r_i(R) > 0 $ if and only if $ R \in S_i $, in view of the fact that $ \Vert v- w \Vert_F = 2\sqrt{2}\sin(\text{d}(v,w)/2) $  holds for $ v,w \in SO(3) $. If we need to constrain the trajectory in cell $ S_i $, $r_i(R)$ is a natural candidate as a zeroing control barrier function as it indicates how far the state is from the cell boundary.  Note that there are many alternatives $ r_i(R) $ to  \eqref{eq:ri}, for example, $ r_i(R) = \theta - \text{d}(R,R_i) $. The reason we choose $  r_i(R)$ as in \eqref{eq:ri} is merely to simplify the expression of its derivatives, as shown later.    

 To ensure the actual attitude trajectory evolves within $ \cup_{i\in \mathcal{N}} S_i $, we need for every time instant $ t\ge 0 $, there exists at least one $i\in \mathcal{N} $ that $R(t)\in S_i$, i.e.,
\begin{equation}
\max_{i\in \mathcal{N}} (r_i(R(t))) > 0, \quad \text{for } t\ge 0.
\end{equation}
This  $ \max $ operation would lead to nonsmooth analysis and a complex formulation \cite{glotfelter2017nonsmooth}. In the following, we will show how to circumvent the nonsmooth analysis.

Define
\begin{equation} \label{eq:barrier_func}
h(R) = \sum_{i\in \mathcal{N}} s(r_i(R)/\epsilon) - \delta,
\end{equation}
where $  \delta >0$ is a user-defined constant, and $s(\cdot)$ is given in \eqref{eq:smoothing_function}. The associated constrained set is thus $ C_h^R = \{ R \in SO(3): h(R) \ge 0 \} $. Since $  \cup_{i\in \mathcal{N}} S_i = \{ R: h(R) > -\delta \} $, it is straightforward that $ C_h^R \subset \cup_{i\in \mathcal{N}} S_i $, and the constant $ \delta $ represents the safety margin. The conservativeness is illustrated in Fig. \ref{fig:geometric_intepretation_conservetiveness} in the planar case. For any given $ C^2 $ curve $ c\circ \tau: \mathbb{R}_{\ge} \to \cup_{i \in \mathcal{N}} S_i $, we can find a safety margin (i.e., $ \delta $) such that the curve $ c\circ \tau $ evolves within $ C_h^R $. In the following, we thus assume $  c\circ \tau(t) \in C_h^R $ for $ t\ge 0 $.

\begin{figure}[h]
	\centering
	\includegraphics[width=\linewidth]{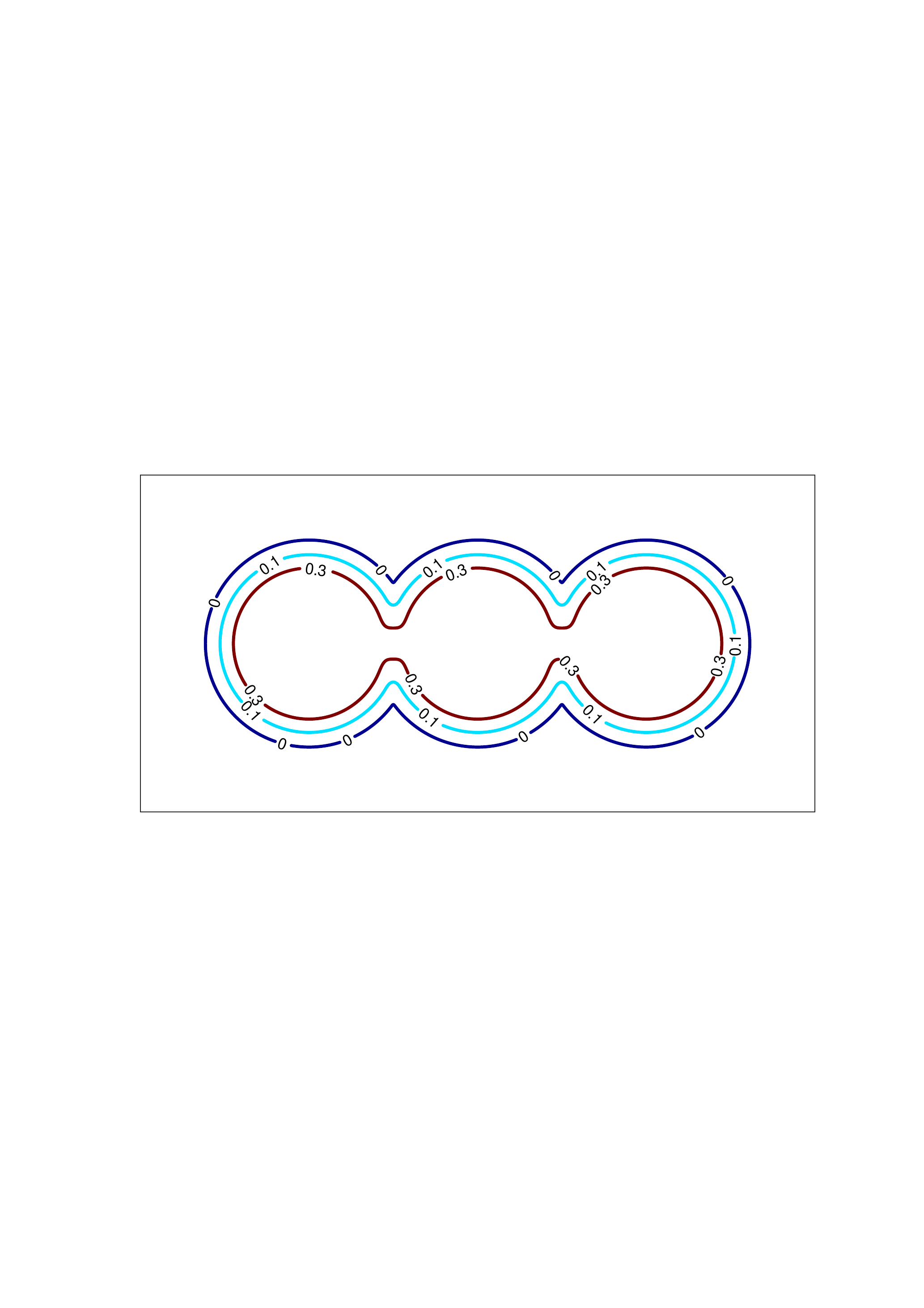}
	\caption{Illustration of $ C_h $ in the planar case with different conservativeness $ \delta^\prime $s.}
	\label{fig:geometric_intepretation_conservetiveness}
\end{figure}

We embed the attitude dynamics in \eqref{eq:dyn} in a higher dimensional Euclidean space  as
\begin{equation} \label{eq:embedded_dyn}
\dot{x} := f(x) + gu,
\end{equation}
where $ x = (r_{11}, r_{12}, \cdots, r_{33}, \omega_1,  \omega_2,  \omega_3 ) \in \mathbb{R}^{12}, f(x) = \big(r_{12}\omega_3 - r_{13}\omega_2; r_{13}\omega_1 - r_{11}\omega_3;r_{11}\omega_2 - r_{12}\omega_1; r_{22}\omega_3 - r_{23}\omega_2;r_{23}\omega_1 - r_{21}\omega_3; r_{21}\omega_2 - r_{22}\omega_1; r_{32}\omega_3 - r_{33}\omega_2; r_{33}\omega_1 - r_{31}\omega_3; r_{31}\omega_2 - r_{32}\omega_1; J^{-1}(-[\omega]_{\times} J \omega)   \big) \in \mathbb{R}^{12}, g =  \begin{pmatrix}
0_{9 \times 3} \\
J^{-1}
\end{pmatrix} $. This is equivalent to \eqref{eq:dyn} by rewriting the attitude dynamics in a vectorized manner. Note that for $ r_{11}, r_{12}, \cdots, r_{33} $, there  exist $ 6 $ implicit equality constraints since they are elements of a rotation matrix. We denote the corresponding $ 6- $dimensional submanifold $ C_{TSO(3)} := \{ x\in \mathbb{R}^{12} : \begin{psmallmatrix}
x_{1} & x_{2} & x_{3} \\
x_{4} & x_{5} & x_{6} \\
x_{7} & x_{8} & x_{9} 
\end{psmallmatrix}  \in SO(3) \}  $. Moreover, if $ x(0) \in C_{TSO(3)} $, then given any control signal $ u $ which is Lipschitz continuous in $ x $, the solution of the dynamical system \eqref{eq:embedded_dyn} satisfies $ x(t) \in C_{TSO(3)} $  for $ t \ge 0 $. This fact can be easily obtained considering that \eqref{eq:embedded_dyn} and \eqref{eq:dyn} are equivalent.  $ h $ in \eqref{eq:barrier_func} is thus a function of the system states $ x $, in particular, of the states $ (x_1,x_2,\cdots,x_9) $. The associate constrained set is $ C_h:=\{ x \in C_{TSO(3)}: h(x)\ge 0 \} $.  
 
For all $ x \in C_h $, we obtain $ L_g h = 0 $, and  $ L_g L_f h(x)  $ may vanish at some points in $ C_h $ (see Appendix for derivations). Here we note that the higher-order control barrier function design developed in \cite{wences2020correct,xiao2019control} is not directly applicable as a result of this issue.  More specifically, to render the set $ C_h $ forward invariant,   the existing methods  enforce it by requiring $ \dot{h}(x) + \alpha(h(x)) \ge 0$ for all $ x\in C_h $. Re-denote  $h_1(x): = \dot{h}(x) + \alpha(h(x)) = L_{f} h + \alpha(h(x)) $. In order to show $ h_1(x) \ge 0$, it is again enforced by a linear constraint  $ L_gL_f h(x) u  + L_f h_1(x) + \beta(h_1(x))\ge 0 $  on $ u $ with $ \alpha,\beta $ extended class $ \mathcal{K} $ functions.  However, this inequality constraint may not be feasible when $ L_g L_f h(x)  $ vanishes.

 A key observation regarding the singularity set  $ \mathcal{D} = \{ x\in C_{h} : L_g L_f h(x) =0 \} $ is given below:
\begin{prop} \label{prop:singular_points}
Let  $  \mathcal{D} = \{ x\in C_{h} : L_g L_f h(x) =0 \} $. Then, there exists a constant $ \xi >0 $ such that $ \inf_{x\in \mathcal{D}} h(x) \ge \xi $.
\end{prop}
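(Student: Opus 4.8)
The plan is to show that $L_g L_f h(x) = 0$ forces $x$ to lie deep inside $C_h$, away from the boundary $h(x)=0$, so that the infimum of $h$ over $\mathcal{D}$ is strictly positive. First I would compute $L_g L_f h(x)$ explicitly. Since $h(x) = \sum_{i\in\mathcal{N}} s(r_i(R)/\epsilon) - \delta$ and $L_g h = 0$ (the barrier only depends on the rotational coordinates $x_1,\dots,x_9$, not on $\omega$), the term $L_g L_f h$ picks up, via the chain rule, a factor of $g = (0_{9\times3};J^{-1})^\top$ contracted against the $\omega$-derivative of $L_f h$. Using $\dot r_i = \langle R - R_i, R[\omega]_\times\rangle_F = -\langle R^\top R_i, [\omega]_\times\rangle_F$, a short computation (of the kind already relegated to the Appendix) gives $L_f h$ as a linear form in $\omega$ whose coefficient vector is $\sum_{i\in\mathcal{N}} \tfrac1\epsilon s'(r_i/\epsilon)\,\mathrm{sk}(R^\top R_i)^\vee$, where $\mathrm{sk}(A) = (A-A^\top)/2$; differentiating once more in $\omega$ and multiplying by $J^{-1}$ yields
\begin{equation} \label{eq:LgLfh_explicit}
L_g L_f h(x) = \Big( J^{-1} \sum_{i\in\mathcal{N}} \tfrac{1}{\epsilon} s'\!\big(r_i(R)/\epsilon\big)\, \big(\mathrm{sk}(R^\top R_i)\big)^{\vee} \Big)^{\!\top}.
\end{equation}
Because $J^{-1}$ is invertible, $L_g L_f h(x) = 0$ if and only if the vector $w(R) := \sum_{i\in\mathcal{N}} \tfrac1\epsilon s'(r_i(R)/\epsilon)\,(\mathrm{sk}(R^\top R_i))^\vee$ vanishes.

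The next step is to control the sign pattern of the coefficients $s'(r_i(R)/\epsilon)$. Recall $s$ is the transition function in \eqref{eq:smoothing_function}: $s'(y) = 0$ for $y \le 0$ and for $y \ge 1$, and $s'(y) > 0$ for $y\in(0,1)$. Since $r_i(R)/\epsilon = 1 - \Vert R_i - R\Vert_F^2/(2\epsilon) \le 1$ always, and $r_i(R)/\epsilon \le 0$ precisely when $R \notin S_i$, the index $i$ contributes to $w(R)$ only when $R \in S_i$, i.e. only for $i$ in the (nonempty, since $\cup S_i = SO(3)$) active set $\mathcal{A}(R) := \{ i\in\mathcal{N} : R\in S_i\}$, and then with a strictly positive weight $s'(r_i/\epsilon) > 0$. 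The geometric heart of the argument is then: if $R$ is close to the boundary of $C_h$, then every active cell "pulls in the same direction," so the weighted sum $w(R)$ cannot be zero. Concretely, $(\mathrm{sk}(R^\top R_i))^\vee = -\tfrac{\theta(R^\top R_i)}{2\sin\theta(R^\top R_i)}\log(R^\top R_i)^\vee$ up to a positive scalar is aligned with the geodesic direction from $R$ toward $R_i$; I would argue that when $R$ is near $\partial C_h$, the angular distance $\mathrm{d}(R,R_i)$ is close to $\theta$ for each active $i$, forcing $R$ to sit near the common boundary of the active balls, where (using the curvature bound $\theta < \pi/2$ and Lemma~\ref{lem:convex_cell}) all the geodesic directions toward the $R_i$ lie in an open half-space, hence their positive combination is nonzero. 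This shows $\partial C_h \cap \mathcal{D} = \emptyset$, and in fact a neighborhood of $\partial C_h$ misses $\mathcal{D}$.

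To turn this into the quantitative bound $\inf_{x\in\mathcal{D}} h(x) \ge \xi > 0$, I would use a compactness argument. Both $C_h$ and $C_{TSO(3)}$ are closed, and the barrier only depends on $R$ through $x_1,\dots,x_9$, which range over the compact set $SO(3)$; the level set $\{R : 0 \le h(R)\}$ projected to $SO(3)$ is compact, and $\mathcal{D}$, being the zero set of the continuous map $R\mapsto w(R)$ intersected with this compact set, is compact. The function $h$ is continuous, so it attains its minimum on $\mathcal{D}$ (provided $\mathcal{D}\neq\emptyset$; if $\mathcal{D}=\emptyset$ the claim is vacuous with any $\xi$). By the previous paragraph, $h > 0$ at every point of $\mathcal{D}$, so the minimum $\xi := \min_{x\in\mathcal{D}} h(x)$ is strictly positive, which is exactly the claim.

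I expect the main obstacle to be the geometric half-space argument in the second paragraph: making rigorous the claim that near $\partial C_h$ the active cells force the geodesic directions $(\mathrm{sk}(R^\top R_i))^\vee$ into a common open half-space. The cleanest route is probably to avoid an explicit half-space and instead argue by contradiction through compactness directly — suppose $\inf_{x\in\mathcal{D}} h(x) = 0$, extract a sequence $x_k\in\mathcal{D}$ with $h(x_k)\to 0$ and (by compactness of $SO(3)$) $R_k \to R^\ast$ with $h(R^\ast)=0$ and $w(R^\ast)=0$; then derive a contradiction at the single limiting configuration $R^\ast$, where $\sum_{i\in\mathcal{A}(R^\ast)} \beta_i = \delta$ with $\beta_i = s(r_i(R^\ast)/\epsilon)\in(0,1)$ and one checks that $w(R^\ast)=0$ is incompatible with $R^\ast$ lying simultaneously on/near the boundary of $|\mathcal{A}(R^\ast)|$ balls of radius $\theta<\pi/2$. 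This isolates all the delicate geometry into one limiting case and leverages the $\theta < \pi/2$ convexity (Lemma~\ref{lem:convex_cell}) in its strongest form.
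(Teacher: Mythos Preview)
Your computation of $L_g L_f h$ is correct and matches the paper's (their $e^i(q)$ is your $(\mathrm{sk}(R^\top R_i))^\vee$ up to a factor of $2$). The compactness reduction at the end is also fine: $\mathcal{D}$ projects to a closed subset of the compact $SO(3)$, so ``$h>0$ on $\mathcal{D}$'' immediately upgrades to ``$\inf_{\mathcal{D}} h \ge \xi > 0$.'' The gap is entirely in the geometric step, and it is a real gap, not a cosmetic one.

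Your half-space heuristic fails exactly on the points of $\mathcal{D}$ that matter. Take $R$ on the geodesic between two adjacent centers $R_j,R_k$, equidistant from both and inside $S_j\cap S_k$. Then $e^j(R)$ and $e^k(R)$ are \emph{anti-parallel}, so they lie in no common open half-space; yet $s'(r_j/\epsilon)=s'(r_k/\epsilon)>0$, so $w(R)=0$ and $R\in\mathcal{D}$. Your argument that ``near $\partial C_h$ all active directions point the same way'' therefore cannot be the mechanism; anti-parallel configurations occur in the interior and must be handled, not excluded. Your fallback contradiction argument inherits the same difficulty: at a hypothetical limit $R^\ast$ with $h(R^\ast)=0$ and $w(R^\ast)=0$, you still need a structural reason why the weighted sum of the $e^i$ cannot vanish there, and you have not supplied one.

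The paper closes this gap by a direct case analysis rather than a half-space argument. It uses (implicitly) that any $R\in C_h$ lies in at most two cells. If $R$ is in exactly one cell $S_j$, then $w(R)=0$ forces $e^j(R)=0$, i.e.\ $R=R_j$, the center. If $R\in S_j\cap S_k$, then $w(R)=0$ with both weights positive forces $e^j(R)\parallel e^k(R)$, which is equivalent to $R$ lying on the geodesic between $R_j$ and $R_k$; by Lemma~\ref{lem:cell_to_cell} that geodesic sits inside $S_j\cup S_k$, and together with $R\in S_j\cap S_k$ this places $R$ strictly in the interior of $C_h$. Thus $\mathcal{D}$ is identified \emph{explicitly} as the finite set of centers together with portions of finitely many geodesic arcs, all bounded away from $\partial C_h$, and the existence of $\xi$ follows. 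What this buys is that the anti-parallel case is not an obstacle but the entire content of the two-cell branch. What your approach would buy, if completed, is freedom from the ``at most two cells overlap'' hypothesis, which the paper never states but uses; if you want to pursue your route, the missing lemma is that $w(R)=0$ with $h(R)\le 0$ is impossible for \emph{any} number of active cells, and that requires more than a half-space claim.
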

\begin{proof}
	See Appendix.
\end{proof}

In the following a remedy is derived to handle the singularity set $\mathcal{D} = \{ x\in  C_{h}:  L_g L_f h(x) = 0  \} $ effectively.  Denote the associated set $ C_{h,\xi} = \{ x\in C_{TSO(3)} : h(x) \ge \xi \}$.  Let $ \chi(\cdot) $ be a twice differentiable function satisfying the following properties
\begin{equation}\label{eq:q}
\left\{ \begin{array}{l}
\chi(0) = 0,\\ \chi(a) = 1, \text{ for } a\ge 1,\\ \frac{d\chi}{d\iota}(a)>0, \text{ for } a < 1.
\end{array} \right.
\end{equation}
Then we smoothly truncate $ h(x) $ by the upper bound $ \xi $, i.e., 
\begin{equation}\label{eq:modified_h}
b(x) = \chi(h(x)/\xi)
\end{equation}
It is easy to verify that $ C_h = C_{b}:=\{x \in C_{TSO(3)}: b(x)\ge 0\} $. Thus, in the following we show the forward invariance of the set $ C_{b} $ instead.

  We adopt the procedure of the  higher-order control barrier function design as in \cite{wences2020correct,xiao2019control}. The idea is briefly presented here: from Brezis version of Nagumo's theorem (see \cite[Theorem~4.7]{Blanchini2015} and \cite[Theorem~4]{redheffer1972theorems} for a detailed account), the forward invariance of the set $ C_b $ is guaranteed by showing that on the boundary of $ C_b $, the system states are directed into the interior or along the boundary of the constrained set. This condition is enforced as $ \dot{b}(x) \ge - \alpha(b(x)) $ for all $ x\in C_b $, where $ \alpha $ is a continuously differentiable, extended class $ \mathcal{K} $ function. Note that $  L_g b = 0 $, then $b_1(x): = \dot{b}(x) + \alpha(b(x)) = L_f b + \alpha(b(x))$ is still a function of the state $ x $. To guarantee the forward invariance of the set $C_{b1}:= \{ x \in C_{TSO(3)}: b_1(x) \ge 0 \}$, using  Brezis version of Nagumo's theorem again, we have the new condition $ \dot{b}_1(x) \ge - \beta(b_1(x)) $ for all $ x\in  C_{b} \cap C_{b1} $, where $ \beta $ is a continuously differentiable, extended class $ \mathcal{K} $ function. Thus, the condition we will enforce in real-time is given as
 \begin{equation}
 	L_g b_1(x) u  + L_f b_1(x) + \beta(b_1(x))\ge 0
 \end{equation}
for all $ x\in  C_{b} \cap C_{b1} $. The feasibility result is as follows:

\begin{prop}
 The inequality condition on $ u \in \mathbb{R}^3 $
\begin{equation}\label{eq:inequality_cond}
L_g b_1(x) u  + L_f b_1(x) + \beta(b_1(x))\ge 0
\end{equation}
is feasible for all $ x\in C_{b} \cap C_{b1}  $.
\end{prop}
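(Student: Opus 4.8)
The plan is to reduce the claim to a pointwise feasibility statement for a single scalar affine inequality in $u\in\mathbb{R}^3$. For fixed $x$, condition \eqref{eq:inequality_cond} reads $L_g b_1(x)\,u \ge -\,(L_f b_1(x)+\beta(b_1(x)))$, where $L_g b_1(x)$ is a row vector. If $L_g b_1(x)\neq 0$, then any $u$ of the form $t\,L_g b_1(x)^\top$ with $t$ sufficiently large satisfies it, so feasibility at such $x$ is immediate. Hence it suffices to show that at every $x\in C_b\cap C_{b1}$ with $L_g b_1(x)=0$, the scalar $L_f b_1(x)+\beta(b_1(x))$ is nonnegative; the whole proof therefore comes down to controlling the ``singular'' states of the reduced order-two barrier.

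The next step is to compute $L_g b_1$ explicitly. Since $b=\chi(h/\xi)$ and $h$ (hence $b$) depends only on the rotation-matrix entries while the corresponding block of $g$ is zero, we have $L_g h=0$ (as already noted before \eqref{eq:q}); by the chain rule $L_f b=\tfrac{\chi'(h/\xi)}{\xi}L_f h$ and, differentiating once more along $g$, the term produced by the prefactor $\chi'(h/\xi)/\xi$ is proportional to $L_g h=0$ and $L_g\alpha(b)=\alpha'(b)L_g b=0$, so that $L_g b_1 = L_g L_f b = \tfrac{\chi'(h/\xi)}{\xi}\,L_g L_f h$. Consequently $L_g b_1(x)=0$ if and only if $\chi'(h(x)/\xi)=0$ or $L_g L_f h(x)=0$.

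The key point is that either alternative forces $h(x)\ge\xi$. In the first case, property \eqref{eq:q} gives $\chi'>0$ on $(-\infty,1)$, so $\chi'(h(x)/\xi)=0$ implies $h(x)/\xi\ge 1$. In the second case $x$ lies in the singularity set $\mathcal{D}$ of Proposition \ref{prop:singular_points}, which yields $h(x)\ge\xi$. Thus in both cases $x\in C_{h,\xi}$. On $C_{h,\xi}$ the map $\chi(\cdot/\xi)$ is identically $1$, so $b(x)=1$; moreover twice differentiability of $\chi$ together with $\chi\equiv 1$ on $[1,\infty)$ forces $\chi'(1)=\chi''(1)=0$, and plugging this into the explicit formulas $L_f b=\tfrac{\chi'(h/\xi)}{\xi}L_f h$ and $L_f b_1 = \tfrac{\chi''(h/\xi)}{\xi^2}(L_f h)^2+\tfrac{\chi'(h/\xi)}{\xi}L_f^2 h + \alpha'(b)\tfrac{\chi'(h/\xi)}{\xi}L_f h$ gives $L_f b(x)=0$, $b_1(x)=\alpha(1)$ and $L_f b_1(x)=0$ at every such $x$ (the interior $h>\xi$ being trivial since $b_1\equiv\alpha(1)$ there, the boundary $h=\xi$ handled by the vanishing of $\chi'$ and $\chi''$). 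Since $\alpha,\beta$ are extended class $\mathcal{K}$ functions, $\beta(\alpha(1))>0$, hence $L_f b_1(x)+\beta(b_1(x))=\beta(\alpha(1))>0$ and \eqref{eq:inequality_cond} holds for every $u$. Together with the non-singular case this proves feasibility everywhere on $C_b\cap C_{b1}$.

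I expect the main obstacle to be the bookkeeping on the boundary stratum $\{x:h(x)=\xi\}$: there $b$ equals the constant $1$, yet one must still verify that the ambient Lie derivatives $L_f b$ and $L_f b_1$ vanish, which is exactly where the second-order flatness $\chi'(1)=\chi''(1)=0$ is used, and one must also note that $\alpha$ and $\beta$ are defined on intervals containing the relevant ranges (in particular the values $1$ and $\alpha(1)$), which is implicit in the construction of \eqref{eq:modified_h}. Everything else is routine once the identity $L_g b_1=\tfrac{\chi'(h/\xi)}{\xi}L_g L_f h$ is in hand and Proposition \ref{prop:singular_points} is invoked to place $\mathcal{D}$ inside $C_{h,\xi}$.
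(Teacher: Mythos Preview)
Your argument is correct and follows essentially the same two-case split as the paper's proof: on $C_{h,\xi}\cap C_{b1}$ one has $L_g b_1=0$ and $L_f b_1=0$, so the inequality reduces to $\beta(b_1(x))\ge 0$; on $(C_h\setminus C_{h,\xi})\cap C_{b1}$ one has $L_g b_1\neq 0$, so feasibility is automatic. You supply the supporting computations (the identity $L_g b_1=\tfrac{\chi'(h/\xi)}{\xi}L_gL_fh$, the use of Proposition~\ref{prop:singular_points}, and the flatness $\chi'(1)=\chi''(1)=0$) that the paper states more tersely or relegates to the appendix, but the logical structure is the same.
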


\begin{proof}
We examine the condition in two cases. If $ x\in C_{h,\xi} \cap C_{b1} $, we have $ L_g b_1 = 0, L_f b_1 = 0 $, thus inequality in \eqref{eq:inequality_cond} is equivalent to $ \beta(b_1(x)) \ge 0 $, which is trivially satisfied. If $   x\in (C_{h} \setminus C_{h,\xi})  \cap C_{b1}$, $  L_g b_1 \neq 0 $, thus we can always find a $ u $ that satisfies  \eqref{eq:inequality_cond}. 
\end{proof}

Suppose a nominal bounded control input $ u_{nom}(x) $, Lipschitz continuous in $ x $, has been designed for the attitude dynamics and the closed-loop solution tracks the constructed reference trajectory $ \gamma $. We modify the control input online to account for the safety constraints. Concretely, the controller is given by the quadratic program below:
\begin{equation} \label{eq:controllerQP}
\begin{aligned}
& u(x) = \arg   \min_{u \in \mathbb{R}^3}\|u - u_{nom}\|^2 \\
   & \hspace{-30pt} \text{s.t.} \quad L_g b_1(x) u  + L_f b_1(x) + \beta(b_1(x))\ge 0.
\end{aligned}
\end{equation}
 This formulation reflects that the safety constraint has priority over the tracking mission.

\begin{thm} \label{thm:sys_state}
	For the attitude control system in \eqref{eq:dyn}, the controller \eqref{eq:controllerQP} renders the set $ C_b \cap C_{b1}  $ forward invariant.
\end{thm}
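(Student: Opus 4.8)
The plan is to chain together the forward-invariance arguments via Nagumo/Brezis twice, using the feasibility of the QP as the glue. First I would observe that since $u_{nom}(x)$ is Lipschitz and bounded, and the QP \eqref{eq:controllerQP} is a strictly convex program whose single affine constraint has a coefficient $L_g b_1(x)$ that is continuous in $x$ and, by the preceding Proposition, never forces infeasibility on $C_b \cap C_{b1}$, the solution map $x \mapsto u(x)$ is well-defined and locally Lipschitz on $C_b \cap C_{b1}$ (this uses the standard result that the KKT solution of a QP with a nonvanishing-or-inactive constraint is Lipschitz; on the set $C_{h,\xi}$ the constraint coefficient vanishes but the constraint is inactive, so $u = u_{nom}$ there and no issue arises). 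Hence the closed-loop vector field $f(x) + g\,u(x)$ is locally Lipschitz on $C_b \cap C_{b1}$ and solutions exist and are unique.

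Next I would verify the two Nagumo-type conditions. The controller constraint guarantees $\dot b_1(x) = L_f b_1(x) + L_g b_1(x) u(x) \ge -\beta(b_1(x))$ for all $x \in C_b \cap C_{b1}$; in particular, on the boundary piece where $b_1(x) = 0$ we get $\dot b_1(x) \ge 0$, so by Brezis' version of Nagumo's theorem (cited as \cite[Theorem~4.7]{Blanchini2015}) the set $C_{b1}$ is forward invariant for the closed-loop system, provided trajectories stay in the domain where the estimate holds. Simultaneously, by the construction of $b_1$, namely $b_1(x) = L_f b(x) + \alpha(b(x))$, whenever $x \in C_{b1}$ we have $\dot b(x) = L_f b(x) = b_1(x) - \alpha(b(x)) \ge -\alpha(b(x))$, so on the boundary piece $b(x) = 0$ we get $\dot b(x) \ge 0$ (using $\alpha(0)=0$), and Nagumo's theorem again yields forward invariance of $C_b$. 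Combining, on the intersection $C_b \cap C_{b1}$ both barrier conditions push the flow inward or tangent, so starting from $x(0) \in C_b \cap C_{b1}$ the trajectory remains in $C_b \cap C_{b1}$ for all $t \ge 0$. Finally, since $C_b = C_h \subset \cup_{i\in\mathcal{N}} S_i$ by the earlier remark and $x(t) \in C_{TSO(3)}$ is automatic from the equivalence of \eqref{eq:dyn} and \eqref{eq:embedded_dyn}, this establishes the claim.

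The main obstacle I anticipate is the rigor of the Nagumo/Brezis application on a set with boundary and on the embedded-but-constrained manifold $C_{TSO(3)}$: one must be careful that the relevant tangent-cone condition is checked relative to the manifold, that $C_b \cap C_{b1}$ is closed (it is, as a preimage of closed sets under continuous maps intersected with the closed submanifold), and that the Lipschitz regularity of $u(x)$ genuinely holds across the ``seam'' $\partial C_{h,\xi}$ where the constraint transitions from inactive to possibly active — this is exactly where the smooth truncation via $\chi$ in \eqref{eq:modified_h} and the strict separation $\inf_{\mathcal D} h \ge \xi$ from Proposition \ref{prop:singular_points} do the work, ensuring $L_g b_1$ only vanishes where the constraint is already slack. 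A secondary point worth a sentence is forward completeness: since $u_{nom}$ is bounded and the QP modification changes $u$ by a bounded amount (the feasible set always contains a bounded selection), $u(x)$ is bounded on the compact-in-the-relevant-coordinates set $C_b \cap C_{b1}$, so solutions do not escape in finite time and the invariance statement is meaningful for all $t \ge 0$.
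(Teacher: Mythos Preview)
Your proposal is correct and follows essentially the same route as the paper: establish well-definedness and local Lipschitz continuity of the QP solution (with the seam at $\partial C_{h,\xi}$ handled via the $\chi$-truncation and Proposition~\ref{prop:singular_points}, so that $L_g b_1$ vanishes only where the constraint is already slack), then apply the Brezis--Nagumo tangent-cone condition to $C_{b1}$ and to $C_b$ in turn. The paper simply makes the KKT closed-form solution explicit and carries out the seam-continuity check by a Cauchy-sequence limit argument, which is exactly the detail you flag as the main obstacle.
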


\begin{proof}
	The feasibility of the linear inequality constraint on $ u $ is  guaranteed by Proposition \ref{thm:sys_state} for every $ x \in C_b \cap C_{b1}  $. The solution to the quadratic program  \eqref{eq:controllerQP} has a closed-form solution, given by the KKT condition \cite{boyd2004convex}, as
	\begin{equation} \label{eq:controllerSol}
	\begin{aligned}
	u(x) = u_{nom} +   \mu L_g^\top b_1(x)
	\end{aligned}
	\end{equation}
	with 
	\begin{multline*}
	   	\mu  = \left\{ \begin{array}{l}
	0,   \quad  \text{   if }   L_{g}b_1 u_{\text{nom}} + \beta(b_{1}) + L_{f}b_1 \ge 0 , \\
	   \dfrac{- L_{g}b_1 u_{\text{nom}} - \beta(b_1) - L_{f}b_1}{\| L_{g}b_1 \|^2},  \text{ otherwise. }
	\end{array}\right. 
	\end{multline*}

	This is derived from considering whether the constraint in \eqref{eq:controllerQP} is activated or not and thus omitted here. Viewing  $L_g b_1$ in \eqref{eq:lgb1}, the property of $\chi(\cdot)$ and Proposition \ref{prop:singular_points},  we obtain $ L_g b_1 = 0 $ if and only if when $ x\in C_{h,\xi}\cap C_{b1} $, and,  in the meanwhile,  $L_{g}b_1 {u}_{\text{nom}} + \beta(b_1) + L_{f}b_{1} \ge 0 $ is trivially satisfied for $x\in C_{h,\xi}\cap C_{b1} $. Thus the solution is well-defined for every $ x\in  C_b \cap C_{b1}  $. 
	
	
	The solution in \eqref{eq:controllerSol} can be viewed as 
	\begin{equation}
	    u(x) = v_1(x) + v_2(v_3(x))v_4(x)
	\end{equation}
	with $v_1(x) = u_{\text{nom}}(x), v_2(s) = \left\{ \begin{smallmatrix} 
		0, & \text{ if } s \ge 0 \\
		s, & \text{ if } s < 0 \end{smallmatrix}\right. ,  v_3(x) = L_{g}b_1 u_{\text{nom}} + \beta(b_{1}) + L_{f}b_1, v_4(x) =  \frac{- L_{g}^\top b_1 }{\| L_{g}b_1 \|^2}$. For $x \in (C_b \setminus  C_{h,\xi})\cap C_{b1} $, $L_{g}b_1(x) \neq 0$, we obtain $v_1, v_2, v_3, v_4 $ are locally Lipschitz continuous and thus $u(x)$ is locally Lipschitz continuous in $ (C_b \setminus  C_{h,\xi} ) \cap  C_{b1}$. Furthermore, for $x \in  C_{h,\xi}\cap C_{b1} $, we have $u(x) = u_{\text{nom}}(x)$ and thus $u(x)$ is locally Lipschitz continuous in $ C_{h,\xi}\cap C_{b1} $.
	
	 Now we show that the control input $ u(x) $ is continuous at the boundary between $(C_b \setminus  C_{h,\xi})\cap C_{b1}$ and $C_{h,\xi}\cap C_{b1}$.  Assume a Cauchy sequence of points $ \{ x_i \}_{i = 1,2, 3,\cdots} \subset (C_{h} \setminus C_{h,\xi})\cap C_{b1}, \lim_{i\to \infty } x_i = x_0\in \partial  C_{h,\xi} \cap C_{b1}   $. Viewing $ L_f b(x_0) = 0 $,  $ b_1(x_0) = L_f b(x_0) + \alpha(b(x_0))$, we get $ b_1(x_0) = \alpha(b(x_0))  $. Examining closer, it further derives $ b_1(x_0)= \alpha(1)> 0$. We then obtain $ \lim_{i\to \infty }   u(x_i) = u(x_0) $, viewing  the closed-form solution in \eqref{eq:controllerSol} and the facts that $  u_{\text{nom}}(x_i)$ is bounded by definition, $ \lim_{i\to \infty } L_g b_1(x_i) = 0,  \lim_{i\to \infty } L_f b_1(x_i) = 0,   \lim_{i\to \infty }  \beta(b_1(x_i)) = \beta(b_1(x_0)) >0 $. Thus $  u(x) $ is  Lipschitz continuous in $ x$ for all $ C_b\cap C_{b1} $, which guarantees the existence and uniqueness of the system solution.
	 
	 
	 For all $x\in C_b \cap C_{b1}$, we have
	 \begin{equation}
	 	\begin{aligned}
	 		\frac{\partial b }{\partial x} (f(x) + g(x) u ) = b_1&\ge 0; \\
	 			\frac{\partial b_1 }{\partial x} (f(x) + g(x) u ) &\ge 0
	 	\end{aligned}
	 \end{equation}
	 Thus, the vector field lies in the tangent cone of set $C_b \cap C_{b1}$ for all $x\in C_b \cap C_{b1}$. Applying  Brezis version of Nagumo's Theorem \cite[Theorem~4]{redheffer1972theorems} and noticing the locally Lipschitz vector field, we obtain the set $C_b \cap C_{b1}$ is thus forward invariant.
\end{proof}

\begin{rem}
	Compared to the nonsmooth barrier function design in \cite{glotfelter2017nonsmooth}, we formulate a smooth control barrier function and thus avoid the nonsmooth analysis. This formulation comes at the cost of conservativeness in terms of the set difference between $ \cup_{i \in \mathcal{N}} S_i $ and $ C_h^R $. Note that the conservativeness can be explicitly adjusted by choosing a proper $ \delta $. Based on this smooth control barrier function, we restore the solvable optimization problem in \eqref{eq:controllerQP}. This set difference can be viewed as a safety margin in many robotic applications.
\end{rem}

\begin{rem}
	Although \cite{wu2015safety} has studied the application of barrier functions in constrained attitude control problem,  the proposed framework in this paper is generally more advantageous as 1)zeroing instead of reciprocal barrier function is used, which is well-defined even outside of the safety set and is guaranteed to be robust to model perturbations\cite{ames2016control}; 2)here we deal with safety regions of arbitrary shape and the feasibility to the online optimization is guaranteed.
	
\end{rem}

\begin{rem}
	In Theorem \ref{thm:sys_state} we guarantee the forward invariance of the set $ C_b \cap C_{b1} $ instead of $ C_h $. This does not cause conservativeness. For any $ h(x(0)) > 0 $, or equivalently, $ b(x(0))>0 $, there always exists an extended class $ \mathcal{K} $ function $ \alpha(\cdot) $ such that $ b_1(x(0)) = L_f b (x(0)) + \alpha(b(x(0))) >0  $. In this way, $ C_{b1} $ is constructed such that $ x(0) \in C_b \cap C_{b1} $. 
\end{rem}


\section{Simulations}
In this section, we demonstrate the favorable properties of the constructed reference trajectory and the designed zeroing control barrier function. The scenario is given as follows: the inertia matrix of the rigid body is given by $ J = \begin{bsmallmatrix}
5.5   & 0.06 & -0.03\\
0.06 &  5.5   & 0.01\\
-0.03 & 0.01 & 0.1
\end{bsmallmatrix}  \textup{kg}\cdot\textup{m}^2. $ The target attitude is set as $ R_f = I $, and the center points of the sampling cells are given as $ R_3= \exp(15^{\circ}/{180^\circ} \times {\pi}[e_1]_{\times})$, $ R_2= \exp(30^{\circ}/{180^\circ} \times {\pi}[e_2]_{\times})R_3$, $ R_1 = \exp(30^{\circ}/{180^\circ} \times {\pi}[0, 0.447,0.894]_{\times})R_2 $, and the initial attitude $ R_0  =  \exp(10^{\circ}/{180^\circ} \times {\pi}[e_1]_{\times})R_1 $. The radius of the cells is set as $ \theta = 0.3491 $ rad $(20^\circ)  $. The settling time is $ T = 40s $. Based on these data, we show the constructed reference trajectory in red in Fig. \ref{fig:trajectory_illustration}.

 In what follows, we use the saturated controller from \cite{lee2012robust} as the nominal controller in  \eqref{eq:controllerQP}:
\begin{multline} \label{eq:u_norm}
u_{nom} = J \tilde{R}^\top  \dot{\omega}_r + [\tilde{R}^\top \omega_r]_{\times} J \tilde{R}^\top \omega_r \\-k_1(\tilde{R} - \tilde{R}^\top )^{\vee} -k_2 \tanh(\tilde{\omega}),
\end{multline}
where  $ \tilde{R} = R_r^\top R  $,   $ \tilde{\omega}(t) = \omega -\tilde{R}^\top  \omega_r $, $ R_r, \omega_{r} $ are the reference orientation and reference angular velocity obtained from the constructed trajectory $ \gamma $, respectively, $k_1,k_2>0$ are tuning gains and $\tanh(\cdot)$ is the element-wise hyperbolic tangent function. It is shown in \cite{lee2012robust} that the control law \eqref{eq:u_norm} achieves asymptotic convergence of the attitude tracking error from almost all initial conditions.

In the simulations, we augment the control signal in \eqref{eq:u_norm} with an additive signal $ u_{add} = 0.3* \big(\sin(2\pi \frac{t-20}{5}), \sin(\pi \frac{t-20}{5}),-\sin(\pi \frac{t-20}{5}) \big) $ for the time interval $ t\in [20,25] $. This control signal simulates, for example, a human input to the system that could lead to a deviation from the reference trajectory and may even drive the states out of the safe cells. The controller parameters are set as $ k_1 = 0.2, k_2 = 0.2$. The parameters in the control barrier function are chosen as $ \delta = 0.1,\xi = 0.7,\alpha(x) = \beta(x) = x, \chi(x) =\begin{cases}
(x-1)^3+1, & \text{if }x \le 1, \\
1, & \text{if }x > 1.
\end{cases}  $

The simulation results are shown in the following. Fig. \ref{fig:simulated_trajectory_illustration} shows the trajectories in three cases: 1) no additive signal is applied and the control barrier function exists (in blue); 2) additive signal is applied and control barrier function does not exist (in dark red); 3) additive signal is applied and control barrier function exists (in yellow). It is shown that without the additive control signal, the system trajectory is similar to the reference trajectory in Fig. \ref{fig:trajectory_illustration}. However, when the additive signal exists and only the controller in \eqref{eq:u_norm} is applied, the state deviates from the previous trajectory and runs out of the safety cells. Once the control barrier formulation is applied, the resulting trajectory remains in the safety set. This is further supported by the time history of $ b(x) $ in Fig. \ref{fig:traj_bt_history}.

\begin{figure*}[h]
	\centering
	\begin{subfigure}[t]{0.31\linewidth}
		\includegraphics[width=\linewidth]{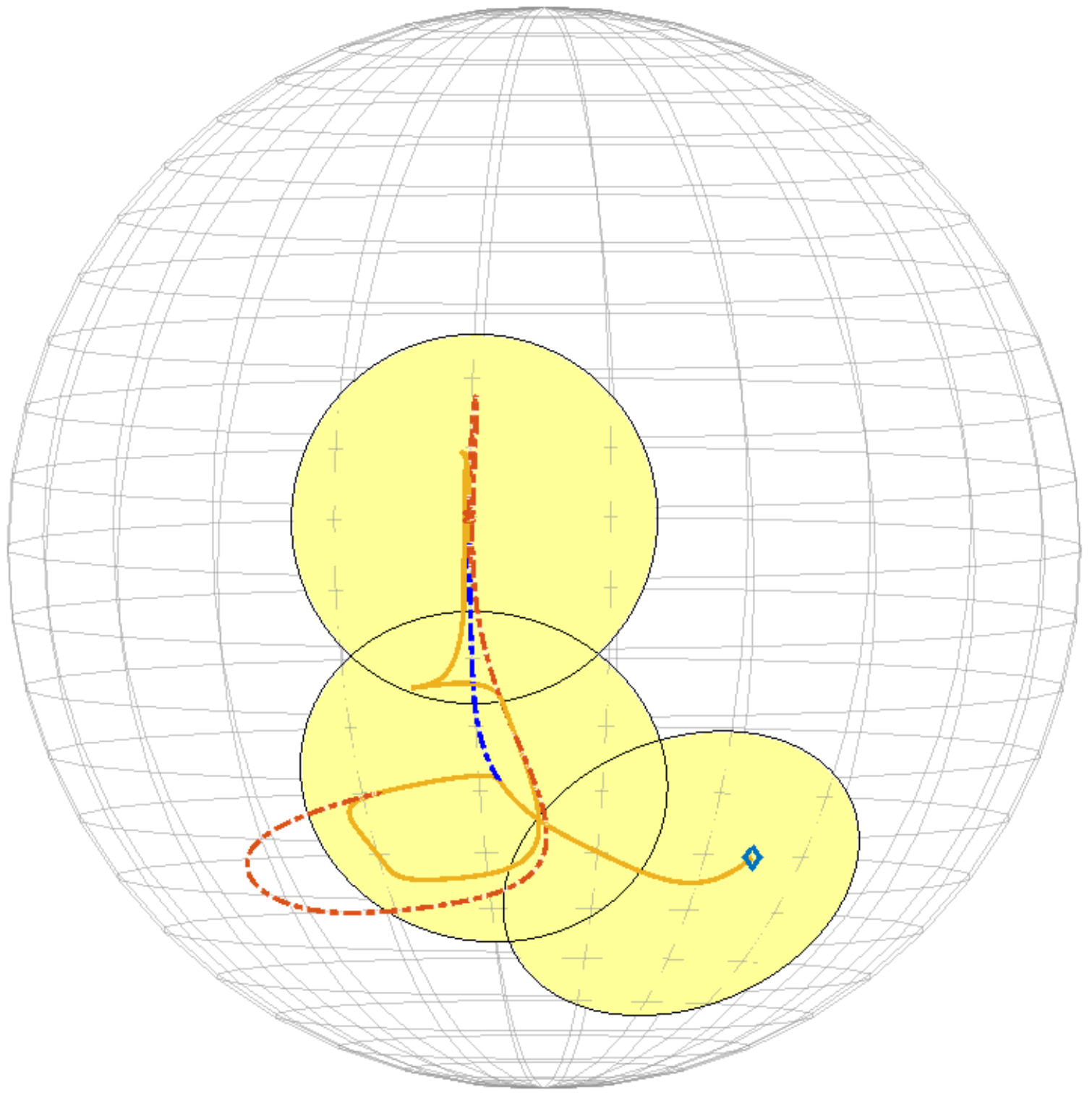}
		\caption{  Trajectory of $x$-axis. }   
	\end{subfigure}
	\begin{subfigure}[t]{0.31\linewidth}
		\centering\includegraphics[width=\linewidth]{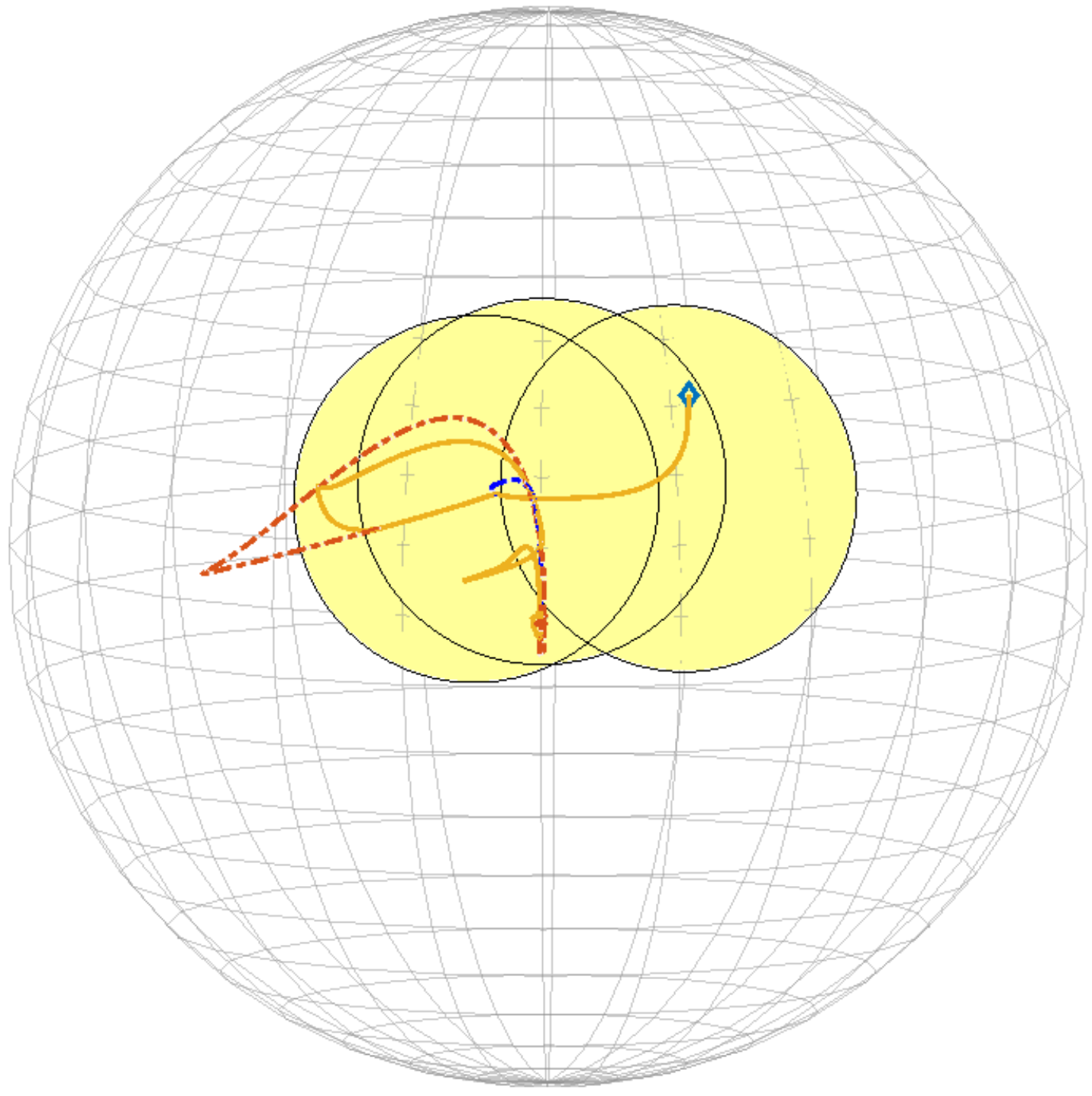}
		\caption{ Trajectory of $y$-axis.}
	\end{subfigure}
	\begin{subfigure}[t]{0.31\linewidth}
		\centering\includegraphics[width=\linewidth]{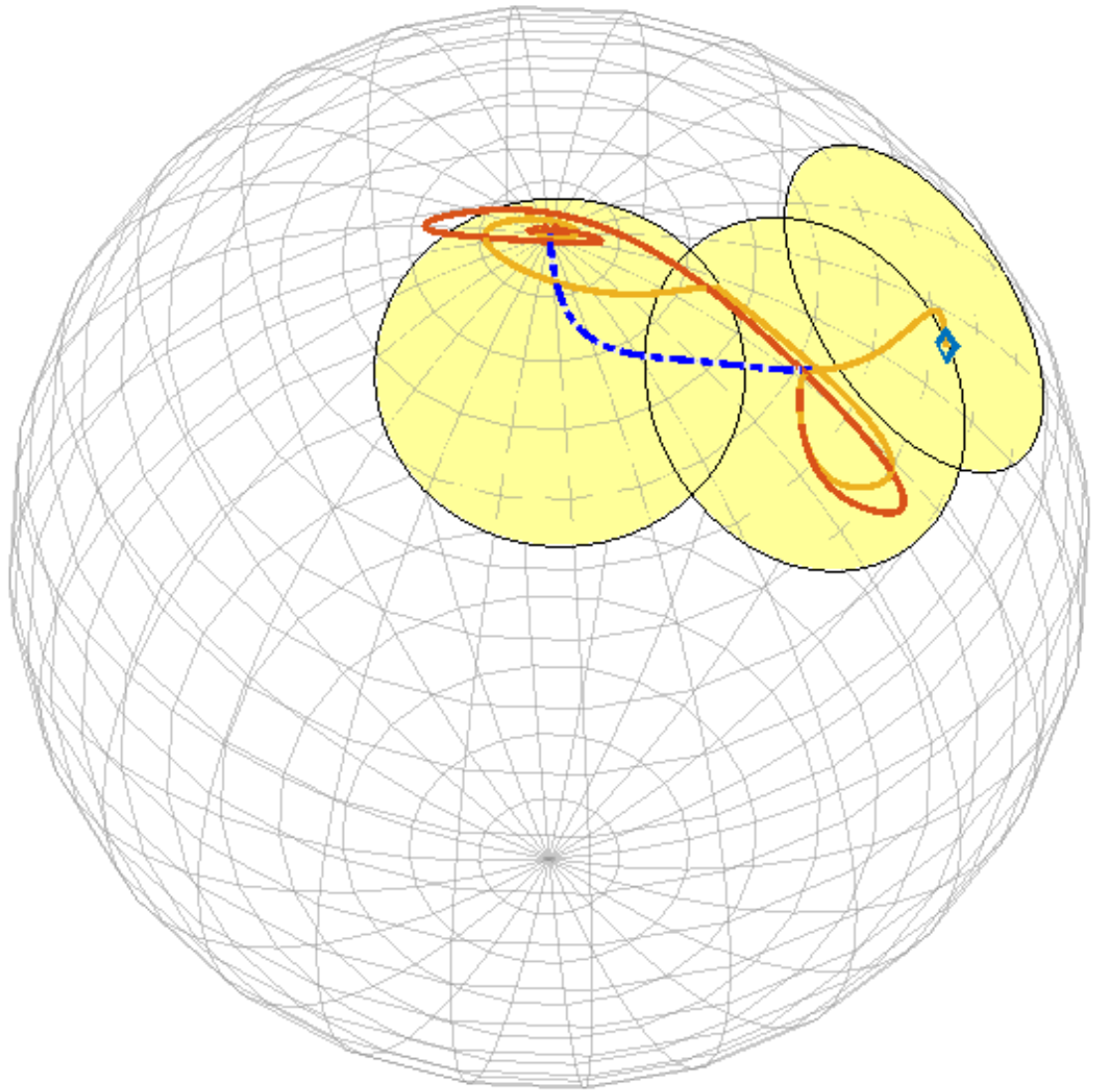}
		\caption{ Trajectory of $z$-axis.}
	\end{subfigure}
	\caption{  Comparison of the trajectories of body-fixed axes in three cases.}
	\label{fig:simulated_trajectory_illustration}	
\end{figure*}

\begin{figure}[ht] 
	\centering
	\includegraphics[width=\linewidth]{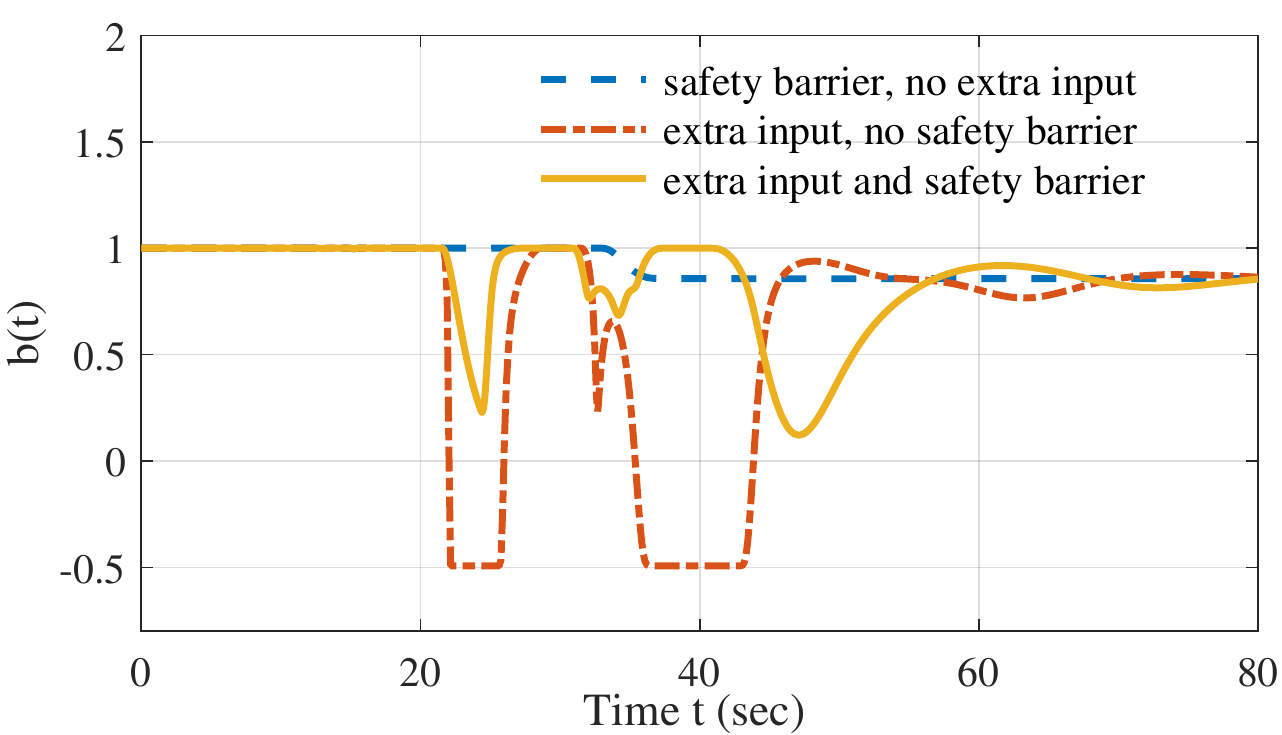}
	\caption{Time histories of the control barrier function $b(x)$ in the three cases.}
	\label{fig:traj_bt_history}
\end{figure}

\section{Conclusion}
In this paper, we construct a $ C^2 $ reference trajectory on $ SO(3) $ and develop a safety certificate utilizing the control barrier function formulation for constrained attitude control problems, following the framework of our previous work in \cite{tan2020constrained}. To construct the reference trajectory, we first design the controlling points for B\'ezier curve generation on $ SO(3) $, which is then time re-parametrized to satisfy boundary conditions. The reference trajectory is shown to be $ C^2 $ continuous, connect the initial and target orientations, and evolve within the predefined safe regions. Moreover, a smooth control barrier function is designed  over a set of overlapping cells to circumvent the non-smooth analysis in previous works. The safety certificate is given as a linear constraint on the control input.  This paper also provides a remedy to handle the states when the singularity of the linear constraint occurs. 

\section*{Appendix}
We collect in this appendix all the results supporting the derivations and claims of the main part of the paper.

From simplicity, denote the auxiliary variables  $ q = (x_1,x_2,\cdots,x_9)$, $ \omega = (\omega_{1},\omega_2,\omega_3) $ and the elements in $ R_i $ as $ \begin{psmallmatrix}
x^i_{1} & x^i_{2} & x^i_{3} \\
x^i_{4} & x^i_{5} & x^i_{6} \\
x^i_{7} & x^i_{8} & x^i_{9} 
\end{psmallmatrix}   $, and let $ [A]_{i,j} $ be the $ (i,j) $th element of matrix $ A $. Thus the state variable is rewritten as $ x = (q, \omega)  $.  As  $ q $ is the stacked vector of the rotation matrix $ R $, we use $ q $ and $R$ interchangeably to denote the attitude state in the following. We obtain  that, for $r_i(q)$ defined in \eqref{eq:ri},

\begin{equation}\label{eq:dridx}
\left[\frac{\partial r_i}{\partial x}\right]_j = \left\{\begin{array}{ll}
-x_j + x^i_j, & j = 1,2,\cdots,9, \\
0, & j = 10,11,12.
\end{array}\right.
\end{equation}
 
 From $ h(q) $ given in \eqref{eq:barrier_func}, we further have $ \frac{\partial h}{\partial x}  = (\frac{\partial h}{\partial q}, \frac{\partial h}{\partial \omega} )  $ with  $  \frac{\partial h}{\partial q} = \sum_{i\in \mathcal{N}} \frac{\partial s(r_i(q)/\epsilon)}{\partial q}  :=  \frac{1}{\epsilon} \sum_{i\in \mathcal{N}} \frac{d s}{d \upeta_i } 	 \frac{\partial (r_i(q))}{\partial q}, \frac{\partial h}{\partial \omega} = \sum_{i\in \mathcal{N}} \frac{\partial s(r_i(q)/\epsilon)}{\partial \omega}  = 0, $
 where $ \upeta_i(q): = r_i(q)/\epsilon $ for brevity. 
 
 
 With  $ f $ in \eqref{eq:embedded_dyn}, we further obtain
 \begin{equation} \label{eq:Lfh}
 \begin{aligned}
 L_f h & = \frac{\partial h}{\partial x} \cdot f = \frac{1}{\epsilon} \sum_{i\in \mathcal{N}} \frac{d s}{d \upeta_i } \frac{\partial (r_i(x))}{\partial x} \cdot f  \\
 & =  \frac{1}{\epsilon}\sum_{i\in \mathcal{N}} \frac{d s}{d \upeta_i } \omega^\top \begin{pmatrix}
 e_{32}^i - e_{23}^i \\
 e_{13}^i - e_{31}^i \\
 e_{21}^i - e_{12}^i 
  \end{pmatrix} := \frac{1}{\epsilon}\sum_{i\in \mathcal{N}} \frac{d s}{d \upeta_i } \omega^\top e^i(q),
  \end{aligned} 
 \end{equation}
 where $ e_{j,k}^i(q) = [R^\top R_i]_{j,k} $ for $ j,k = 1,2,3 $. 
 
 Similarly, we have $   L_g h = \frac{\partial h}{\partial x} \cdot g $. Noticing that $ g $ in \eqref{eq:embedded_dyn} and $ \frac{\partial h}{\partial \omega} = 0 $,  we obtain $  L_g h =  \frac{\partial h}{\partial q} \cdot 0_{9\times 3} + \frac{\partial h}{\partial \omega} \cdot J^{-1} = 0  .$
 
 Moreover, we can calculate that $L_g L_f h = \frac{\partial L_f h}{\partial x} \cdot g = \frac{\partial L_f h}{\partial q}  \cdot 0_{9\times 3} + \frac{\partial L_f h}{\partial \omega} \cdot J^{-1}
 =  \frac{1}{\epsilon}\frac{\partial\sum_{i\in \mathcal{N}} \frac{d s}{d \upeta_i } \omega^\top e^i(q) }{\partial \omega} \cdot J^{-1}$.
  Note that $ \upeta_i(q) $ only relies on $ q $, and thus
  \begin{equation}\label{eq:lglfh}
  \begin{aligned}
  L_g L_f h   =   \frac{1}{\epsilon}\sum_{i\in \mathcal{N}} \frac{d s}{d \upeta_i }  \frac{ \partial\omega^\top e^i(q) }{\partial \omega} \cdot J^{-1}   =   \frac{1}{\epsilon}\sum_{i\in \mathcal{N}} \frac{d s}{d \upeta_i } e^i(q)^\top J^{-1}
  \end{aligned}
  \end{equation}  
 
  \begin{proof}[Proof of Proposition \ref{prop:singular_points}]
  Noting that  $ J $ is positive definite, $ x \in \mathcal{D}:=\{ x\in C_{h}: L_g L_f h = 0 \} $ implies that $ \sum_{i\in \mathcal{N}} \frac{d s}{d \upeta_i } e^i(q)^\top = 0 $. Further noticing that the state can either be in one cell or in the intersection of two cells, we analyze these two cases separately.
  \begin{enumerate}
	\item  If there exists a cell $ S_j, j\in \mathcal{N} $ such that $ r_j(x) >0$, and $ r_k(x) = 0  $ for all $ k \in \mathcal{N},k\neq j $, then $	\sum_{i\in \mathcal{N}} \frac{d s}{d \upeta_i } e^i(q) = 0 \Rightarrow  \frac{d s}{d \upeta_j } e^j(q) = 0 \Rightarrow e^j(q) = 0$
	viewing the property of $ \frac{d s}{d \upeta_i } $. Considering the definition of $ e^i(q) $ in \eqref{eq:Lfh}, we obtain $ R^\top R_i  = I$, i.e., $ R = R_i $, which obviously lies inside $ C_h $. 
	\item If there exist two cells $ S_j, S_k, j,k\in \mathcal{N} $ such that $ R \in S_j\cap S_k$, i.e.,  $ r_j >0, r_k >0 $, then the condition $ 	\sum_{i\in \mathcal{N}} \frac{d s}{d \upeta_i } e^i(q) = 0$  is equivalent to
	\begin{equation}\label{eq:ej_ek}
	\frac{d s}{d \upeta_j } e^j(q) +  \frac{d s}{d \upeta_k } e^k(q)  = 0. 
	\end{equation} 
	As $ x $ lies in the intersection of two cells, we denote $ \exp([v_j]_{\times}) := R^\top R_j,  \exp([v_k]_{\times}) := R^\top R_k  $ for some $ v_j, v_k \in \mathbb{R}^3 $. We can verify that  $ e^j(q) $ is  parallel  to $v_j  $, and  $ e^k(q) $ is  parallel  to  $ v_k $, respectively.  Thus, in order to fulfill condition \eqref{eq:ej_ek}, it suffices that $ v_j \parallel v_k, $
	which means that $ R  $ lies on the geodesic path between $  R_j$ and $ R_k $. Notice again that $ R \in S_j\cap S_K $, from Lemma \ref{lem:cell_to_cell}, we obtain that the system state $ x $ lies in the interior of $ C_h $.
  \end{enumerate}

Thus, the singular point set $ \mathcal{D} $ is composed of all the center points of the sampling cells and certain points on the geodesic path between $ R_j, R_{k}$, where  $S_j$ and $S_k$ are adjacent, $ j,k \in \mathcal{N} $.  It can be checked that there exists a $ \xi>0$ such that $ h(R_i) = s(1)>\xi >0 $ and $ h(R_{\tau}) > \xi >0 $ for $ R_\tau = R_i \exp(\tau \log(R_i^\top R_j)), 0 \le \tau \le 1 $, and thus we obtain  $ \inf_{x \in \mathcal{D}} h(x) \ge \xi.$
\end{proof}

 In the following, we derive the explicit expressions for $ L_f b,  L_f b_1, L_g b_1 $ that are used in the analysis and simulations in previous sections. From \eqref{eq:modified_h}, we obtain $ L_f b = \frac{d\chi}{d\iota} (\frac{h(x)}{\xi}) \frac{\partial h/\xi}{\partial x } \cdot f =\frac{1}{\xi} \frac{d\chi}{d\iota} L_f h$ and $ L_g b  =\frac{1}{\xi} \frac{d\chi}{d\iota} L_g h = 0,$  where $ \iota:= h(x)/\xi $ for brevity.
 
 As $ b_1(x) = L_f b + \alpha(b(x))$, we obtain $ L_f b_1 = \frac{1}{\xi} \left(\frac{1}{\xi} \frac{d^2\chi}{d \iota^2} (L_f h)^2 + \frac{d\chi}{d \iota} L_f^2 h  \right) + \frac{d \alpha}{d b}(b(x))L_f b $ with $L_f^2 h = \frac{1}{\epsilon} \sum_{i\in \mathcal{N}} \left(  \frac{d^2 s}{d \upeta_i^2 } \frac{\partial \upeta_i}{\partial x} \omega^\top e^i(q) +  \frac{d s}{d \upeta_i }  \frac{\partial \omega^\top e^i(q) }{\partial x}  \right) \cdot f = \frac{1}{\epsilon} \sum_{i\in \mathcal{N}} \left( \frac{1}{\epsilon}  \frac{d^2 s}{d \upeta_i^2 } ( \omega^\top e^i(q))^2 +  \frac{d s}{d \upeta_i }  \frac{\partial \omega^\top e^i(q) }{\partial x} \cdot f  \right) $  and 
 \begin{equation}\label{eq:lgb1}
 \begin{aligned}
 L_g b_1 & = L_g L_f b + \frac{d \alpha}{d b}(b(x)) L_g b =  L_g L_f b \\
 & = \frac{1}{\xi} \left( \frac{1}{\xi} \frac{d^2\chi}{d\iota^2}L_g h L_f h  + \frac{d\chi}{d\iota} L_gL_f h \right)  = \frac{1}{\xi}  \frac{d\chi}{d\iota} L_gL_f h
 \end{aligned}
 \end{equation}
 with $ L_gL_f h $ given in \eqref{eq:lglfh}. 
 

\section*{Acknowledgment}
The authors thank Prof. Fatima Silva Leite for the inspiring discussion on the De Casteljau algorithm on $ SO(3) $.

\bibliographystyle{IEEEtran}
\bibliography{IEEEabrv,CDC_fullver}

\begin{thebibliography}{10}
\providecommand{\url}[1]{#1}
\csname url@samestyle\endcsname
\providecommand{\newblock}{\relax}
\providecommand{\bibinfo}[2]{#2}
\providecommand{\BIBentrySTDinterwordspacing}{\spaceskip=0pt\relax}
\providecommand{\BIBentryALTinterwordstretchfactor}{4}
\providecommand{\BIBentryALTinterwordspacing}{\spaceskip=\fontdimen2\font plus
\BIBentryALTinterwordstretchfactor\fontdimen3\font minus
  \fontdimen4\font\relax}
\providecommand{\BIBforeignlanguage}[2]{{%
\expandafter\ifx\csname l@#1\endcsname\relax
\typeout{** WARNING: IEEEtran.bst: No hyphenation pattern has been}%
\typeout{** loaded for the language `#1'. Using the pattern for}%
\typeout{** the default language instead.}%
\else
\language=\csname l@#1\endcsname
\fi
#2}}
\providecommand{\BIBdecl}{\relax}
\BIBdecl

\bibitem{koditschek1988application}
D.~E. Koditschek, ``Application of a new {L}yapunov function to global adaptive
  attitude tracking,'' in \emph{Proceedings of the 27th IEEE Conference on
  Decision and Control,}, 1988, pp. 63--68.

\bibitem{lee2012robust}
T.~Lee, ``Robust adaptive attitude tracking on {$ SO(3)$} with an application
  to a quadrotor uav,'' \emph{IEEE Transactions on Control Systems Technology},
  vol.~21, no.~5, pp. 1924--1930, 2012.

\bibitem{berkane2017hybrid}
S.~Berkane, A.~Abdessameud, and A.~Tayebi, ``Hybrid global exponential
  stabilization on {$ SO (3) $},'' \emph{Automatica}, vol.~81, pp. 279--285,
  2017.

\bibitem{hablani1999attitude}
H.~B. Hablani, ``Attitude commands avoiding bright objects and maintaining
  communication with ground station,'' \emph{Journal of Guidance, Control, and
  Dynamics}, vol.~22, no.~6, pp. 759--767, 1999.

\bibitem{lee2014feedback}
U.~Lee and M.~Mesbahi, ``Feedback control for spacecraft reorientation under
  attitude constraints via convex potentials,'' \emph{IEEE Transactions on
  Aerospace and Electronic Systems}, vol.~50, no.~4, pp. 2578--2592, 2014.

\bibitem{kulumani2017constrained}
S.~Kulumani and T.~Lee, ``Constrained geometric attitude control on {$ SO(3)
  $},'' \emph{International Journal of Control, Automation and Systems},
  vol.~15, no.~6, pp. 2796--2809, 2017.

\bibitem{hu2018anti}
Q.~Hu, B.~Chi, and M.~R. Akella, ``Anti-unwinding attitude control of
  spacecraft with forbidden pointing constraints,'' \emph{Journal of Guidance,
  Control, and Dynamics}, pp. 1--13, 2018.

\bibitem{frazzoli2001randomized}
E.~Frazzoli, M.~Dahleh, E.~Feron, and R.~Kornfeld, ``A randomized attitude slew
  planning algorithm for autonomous spacecraft,'' in \emph{AIAA Guidance,
  Navigation, and Control Conference}, 2001.

\bibitem{kjellberg2015discretized}
H.~C. Kjellberg and E.~G. \vspace{0mm} Lightsey, ``{Discretized quaternion
  constrained attitude pathfinding},'' \emph{Journal of Guidance, Control, and
  Dynamics}, vol.~38, no.~11, pp. 713--718, 2015.

\bibitem{biggs2016geometric}
J.~D. Biggs and L.~Colley, ``Geometric attitude motion planning for spacecraft
  with pointing and actuator constraints,'' \emph{Journal of Guidance, Control,
  and Dynamics}, vol.~39, no.~7, pp. 1672--1677, 2016.

\bibitem{tan2020constrained}
X.~Tan, S.~Berkane, and D.~V. Dimarogonas, ``Constrained attitude maneuvers on
  {SO(3)} : Rotation space sampling, planning and low-level control,''
  \emph{Automatica}, vol. 112, p. 108659, 2020.

\bibitem{glotfelter2017nonsmooth}
P.~Glotfelter, J.~Cort{\'e}s, and M.~Egerstedt, ``Nonsmooth barrier functions
  with applications to multi-robot systems,'' \emph{IEEE control systems
  letters}, vol.~1, no.~2, pp. 310--315, 2017.

\bibitem{wences2020correct}
W.~S. Cortez and D.~V. Dimarogonas, ``Correct-by-design control barrier
  functions for {E}uler-{L}agrange systems with input constraints,'' in
  \emph{2020 American Control Conference (ACC)}, 2020.

\bibitem{xiao2019control}
W.~Xiao and C.~Belta, ``Control barrier functions for systems with high
  relative degree,'' in \emph{2019 IEEE Conference on Decision and Control
  (CDC)}.\hskip 1em plus 0.5em minus 0.4em\relax IEEE, 2019, pp. 27--34.

\bibitem{ames2016control}
A.~D. Ames, X.~Xu, J.~W. Grizzle, and P.~Tabuada, ``Control barrier function
  based quadratic programs for safety critical systems,'' \emph{IEEE
  Transactions on Automatic Control}, vol.~62, no.~8, pp. 3861--3876, 2016.

\bibitem{bullo2004geometric}
F.~Bullo and A.~D. Lewis, \emph{Geometric control of mechanical systems:
  modeling, analysis, and design for simple mechanical control systems}.\hskip
  1em plus 0.5em minus 0.4em\relax Springer Science \& Business Media, 2004,
  vol.~49.

\bibitem{park1995bezier}
F.~Park and B.~Ravani, ``B{\'e}zier curves in {R}iemannian manifolds and lie
  groups with kinematics applications,'' \emph{Journal of mechanical designs},
  vol. 117, no.~1, pp. 36--40, 1995.

\bibitem{crouch1999casteljau}
P.~Crouch, G.~Kun, and F.~S. Leite, ``The {D}e {C}asteljau algorithm on {L}ie
  groups and spheres,'' \emph{Journal of Dynamical and Control Systems},
  vol.~5, no.~3, pp. 397--429, 1999.

\bibitem{lindemann2009simple}
S.~R. Lindemann and S.~M. LaValle, ``Simple and efficient algorithms for
  computing smooth, collision-free feedback laws over given cell
  decompositions,'' \emph{The International Journal of Robotics Research},
  vol.~28, no.~5, pp. 600--621, 2009.

\bibitem{Blanchini2015}
F.~Blanchini and S.~Miani, \emph{Set{-}Theoretic Methods in Control}, ser.
  Systems \& Control : Foundations \& Applications.\hskip 1em plus 0.5em minus
  0.4em\relax Birkh\"auser, 2015.

\bibitem{redheffer1972theorems}
R.~Redheffer, ``The theorems of {Bony and Brezis} on flow-invariant sets,''
  \emph{The American Mathematical Monthly}, vol.~79, no.~7, pp. 740--747, 1972.

\bibitem{boyd2004convex}
S.~Boyd, S.~P. Boyd, and L.~Vandenberghe, \emph{Convex optimization}.\hskip 1em
  plus 0.5em minus 0.4em\relax Cambridge university press, 2004.

\bibitem{wu2015safety}
G.~Wu and K.~Sreenath, ``Safety-critical and constrained geometric control
  synthesis using control {L}yapunov and control barrier functions for systems
  evolving on manifolds,'' in \emph{2015 American Control Conference
  (ACC)}.\hskip 1em plus 0.5em minus 0.4em\relax IEEE, 2015, pp. 2038--2044.

\end{thebibliography}
\end{document}